\theoremstyle{definition}
\newtheorem{remark}{Remark}
\newtheorem{assumption}{Assumption}
\newtheorem{lemma}{Lemma}
\newtheorem{theorem}{Theorem}
\theoremstyle{plain}
\newtheoremstyle{noparens}%
{}
{}%
{\itshape}
{}%
{\bfseries}
{.}%
{ }%
{\thmname{#1}\thmnumber{ #2}\mdseries\thmnote{ #3}}
\theoremstyle{noparens}
\newcommand{\removelatexerror}{\let\@latex@error\@gobble}
\begin{document}
\title{Optimization-based Ramping Reserve Allocation of BESS for AGC Enhancement}
\author{Yiqiao~Xu,~\IEEEmembership{Student Member,~IEEE,}
        Alessandra~Parisio,~\IEEEmembership{Senior Member,~IEEE,}
        Zhongguo~Li,~\IEEEmembership{Member,~IEEE,}
        Zhen~Dong,~\IEEEmembership{Member,~IEEE,}
        and
        Zhengtao~Ding,~\IEEEmembership{Senior Member,~IEEE}
\thanks{This work was partially supported by the projects Supergen Energy Networks Hub (EP/S00078X/2) and Multi-energy Control of Cyber-Physical Urban Energy Systems (EP/T021969/1). The authors are with the Department of Electrical and Electronic Engineering, The University of Manchester, Manchester M13 9PL, U.K. (e-mail: yiqiao.xu@manchester.ac.uk; zhengtao.ding@manchester.ac.uk).}
\thanks{\noindent
\copyright~2023 IEEE. Personal use of this material is permitted. 
Permission from IEEE must be obtained for all other uses, in any current or future media, 
including reprinting/republishing this material for advertising or promotional purposes, 
creating new collective works, for resale or redistribution to servers or lists, 
or reuse of any copyrighted component of this work in other works.  
The final version of record is available at:  
\href{https://ieeexplore.ieee.org/abstract/document/10148805}{doi:10.1109/TPWRS.2023.3282368}.
}}
\maketitle

\thispagestyle{arxiv}   
\pagestyle{arxiv}       

\begin{abstract}
The transient behavior of Automatic Generation Control (AGC) systems is a critical aspect of power system operation. Therefore, fully extracting the potential of Battery Energy Storage Systems (BESSs) for AGC enhancement is of paramount importance. In light of the challenges posed by diverse resource interconnections and the variability associated, we propose an online optimization scheme that can adapt to changes in an unknown and variable environment. To leverage the synergy between BESSs and Conventional Generators (CGs), we devise a variant of the Area Injection Error (AIE) as a measure to quantify the ramping needs. Based on this measure, we develop a distributed optimization algorithm with adaptive learning rates for the allocation of the ramping reserve. The algorithm restores a larger learning rate for compliance with the ramping needs upon detecting a potentially destabilizing event. We demonstrate the effectiveness and scalability of the proposed scheme through comprehensive case studies. It is shown that the proposed scheme can improve the transient behavior of the AGC system by bridging the gap in ramping capability. 
\end{abstract}
\begin{IEEEkeywords}
Battery Energy Storage System, Automatic Generation Control, Distributed Optimization.
\end{IEEEkeywords}

%
\IEEEpeerreviewmaketitle

\section*{List of Key Abbreviations}
\addcontentsline{toc}{section}{List of Main Abbreviations}
\begin{IEEEdescription}[\IEEEusemathlabelsep\IEEEsetlabelwidth{$V_1,V_2,V_3$}]
	\item[AGC] Automatic Generation Control
	\item[ACE] Area Control Error
	\item[AIE] Area Injection Error
	\item[BESS] Battery Energy Storage System
	\item[CG] Conventional Generator
	\item[ISO] Independent System Operator
	\item[FFR] Fast Frequency Reserve
	\item[OCO] Online Convex Optimization
	\item[ORRA] Optimization-based Ramping Reserve\\ Allocation
 	\item[RA] Resource Allocation
	\item[RBF] Radial Basis Function
	\item[SoC] State-of-Charge
	\item[GDB] Governor Dead-Band
	\item[GRC] Generation Rate Constraint
\end{IEEEdescription}
\section*{List of Main Variables and Notations}
\addcontentsline{toc}{section}{List of Variables}
\begin{IEEEdescription}[\IEEEusemathlabelsep\IEEEsetlabelwidth{$V_1,V_2,V_3$}]
    \item[$i,j$] Index for bus and area
    \item[$t,k$] Time index for optimization stage and entire operation span
	\item[$\Delta f_j$] Area frequency deviation
	\item[$ACE$] Area control error
	\item[$AIE$] Area injection error
	\item[$H_j$,$D_j$] Equivalent inertia and damping
	\item[$\Delta P_j^{\rm tie}$] Deviation in tie-line power flows
	\item[$\Delta P_j^{\rm l}$] Deviation in load power
	\item[$\Delta P_i^{\rm m}$] Deviation in CG mechanical power
	\item[$P_i^{\rm b}$] BESS discharge (charge) power
	\item[$\Delta P_i^{\rm gov}$] Deviation in CG governor output
	\item[$\Delta u_i^{\rm gov}$] Deviation in CG governor input
	\item[$u_i^{\rm AGC}$] AGC signal
	\item[$R_i$] Governor droop
	\item[$\mathscr F_i^{\rm GDB}$] Governor dead-band
	\item[$\mathscr F_i^{\rm GRC}$] Generation rate constraint
	\item[$\tau$] Control interval
	\item[$\widehat{AIE}_{i,t}$] Improved AIE perceived by each bus
	\item[$d_{i,t},c_{i,t}$] Power reference signals for discharging and charging of BESS
	\item[$u_{i,t}$] Decision variables to be optimized, where $u_{i,t}=[d_{i,t},-c_{i,t}]^\top$
	\item[$\lambda_{i,t}$] Local Lagrangian multiplier
	\item[$y_{i,t}$] Local information about global constraint
	\item[$f_{i,t}$] Local cost function
	\item[$\kappa_{i,t},\eta_{i,t}$] Adaptive learning rates
\end{IEEEdescription}

\section{Introduction}
\IEEEPARstart{A}{s} countries strive to replace coal-based power generation with renewable energy sources (RESs), power systems are undergoing a transition to support a more diverse range of energy resources. Automatic Generation Control (AGC) is a decentralized balancing mechanism that operates in tens of seconds. In response to net-load variability, local balancing authorities in each area are required to maintain the scheduled system frequency and tie-line power flows while minimizing inter-area oscillations. To achieve this goal, the Area Control Error (ACE) has played an important role \cite{207324}. However, with the growing use of intermittent and stochastic RESs, the regulation burden has become more challenging due to the immature management of these resources \cite{8801901,9399101}. As a result, policymakers and Independent System Operators (ISOs) worldwide are actively exploring the commercial use of Battery Energy Storage Systems (BESSs) to provide grid services.

Several studies \cite{7542188,8447249,8935195} have demonstrated that a reasonably sized Battery Energy Storage System (BESS) can improve Automatic Generation Control (AGC) performance and alleviate pressure on Conventional Generators (CGs). This is due to two factors. Firstly, unlike CGs, BESS features faster dynamics and can better track fast-changing regulation signals. Secondly, BESS can provide symmetric support in both directions and can switch directions instantly. Over the past few decades, many utility-scale BESS projects with AGC functions have been commissioned, and there is a growing trend of coordinating multiple BESSs via a communication network to provide substantial support. For instance, Southern California Edison installed a 10 MW BESS and an 8 MW BESS at different transmission substations \cite{7542188}, while in Germany, an aggregated capacity of 90 MW BESS was equally distributed among six sites \cite{steag}. However, it has been observed in some cases that Battery Energy Storage Systems (BESSs) do not efficiently contribute to the minimization of ACE and can even cause counterproductive regulation. This is partly due to the slow components of CGs that should be self-balanced, but may not be able to be addressed in time under a high penetration rate of BESSs \cite{7115170}. Furthermore, the "neutrality needs" of energy storage may require a portion of BESSs to act in the opposite direction to prevent overcharging or over-discharging \cite{impact}. This can continue to occur if energy-neutral operation is not taken seriously, and BESSs remain involved in AGC after the transients. Additionally, BESSs may overly correct the ACE due to a lack of coordination. This overcompensation creates a regulation requirement in the opposite direction of the area imbalance and can lead to sustained oscillations in system frequency \cite{9445583}.

Academic efforts have been made to fully utilize the potential of BESSs for AGC enhancement, which remains an open challenge. Previous schemes for coordinating utility-scale BESSs and CGs typically followed a priority or capacity-based AGC participation strategy \cite{Cheng:2014,7115170}. To facilitate their participation, existing small-capacity BESSs could be aggregated into a larger entity, which is sometimes referred to as a Virtual Power Plant (VPP). Recent research has investigated the coordinated control of a VPP, which consists of distributed BESSs and heat pump water heaters. For example, \cite{Oshnoei:2022} prioritizes BESSs to respond to ACE beyond the allowable range and identifies the participation factors for VPPs/thermal power plants through multi-objective optimization. In \cite{Oshnoei:2020}, a two-layer Model Predictive Control (MPC) scheme involving distributed BESSs in AGC is proposed, with an ancillary-nominal architecture that provides more efficient control signals to the BESSs, thereby showing superior capability in dealing with uncertainties. To reduce the adverse effects of uncertainty and improve the load-frequency characteristic, \cite{Rafiee:2021} adopts feedback where the BESS contributes slightly to frequency recovery, which is in line with current trends in low-inertia power systems.

Given the difficulty in predicting ACE, decisions may have to be made without future information. To address this, \cite{8449100} tailors an online control policy with a threshold structure for BESS to optimally follow the AGC signal, which is not online as it implements control only after the optimum is approached through a number of iterations. To reduce computational complexity and enable fast online computation, \cite{Stanojev:2020} proposes a decentralized control scheme based on Explicit MPC, which approximates the control laws in an explicit form. Other optimization-based approaches include Approximate Dynamic Programming (ADP) \cite{9721646} and Deep Reinforcement Learning (DRL) \cite{9509287,9310351}. However, DRL needs to be pre-trained with massive data and then deployed online, while ADP can be implemented online but may require extensive computational power, especially when the prediction horizon is large. In contrast, Online Convex Optimization (OCO) requires notably less computational power and is promising for real-time implementation \cite{7044563}. It is an online process requiring agents to repetitively interact with the environment with unknown dynamics for policy improvement. Inspired by recent developments in multi-agent systems, \cite{zhao:2020} combines OCO with a consensus protocol for coordinating multiple BESSs in a fully distributed fashion, which is different from most of the centralized research described above. The algorithm follows the paradigm of Resource Allocation (RA), which brings about two issues that will be elaborated on in one of the research gaps.

From a market practice perspective, energy storage neutrality is crucial for maintaining the operating integrity of BESS, but it has not been taken seriously in many research studies \cite{zhao:2020,Stanojev:2020,Cheng:2014,7115170}. As a consequence, some BESSs may have to operate in opposition to the expected regulation to recover State-of-Charge (SoC) \cite{9103134}, or require a comprehensive SoC control that limits BESS operations to designated periods \cite{7542188}. PJM, an Independent System Operator (ISO) in the US, splits the ACE into a biased signal for slow ramping resources and a hard neutral signal for fast ramping resources like BESS. Since 2017, PJM has switched to "conditional neutrality" because the previous "hard neutrality" was ultimately a poor design from a long-term perspective to minimize ACE \cite{neutrality}. Midcontinent ISO (MISO) has introduced a different market design known as AGC Enhancement to better utilize these ramping reserves, which prioritizes BESSs in AGC and withdraws their deployments in batches once the system frequency is restored \cite{7285889}. Furthermore, there are some practical concerns associated with the use of AGC enhancement and other schemes, which originate from the underlying architecture of AGC, i.e., the calculation ACE. Generally, ACE provides a proxy error signal for the true area imbalance by using a static frequency bias. In practice, bias uncertainty, which refers to the discrepancy between the frequency bias and the area's frequency response characteristics, is not uncommon since the frequency bias is static and updated annually \cite{manual}. One major reason for this is that, owing to their slow dynamics, CGs do not strictly adhere to the governor droop during transients. Miscalculated ACE can result in nuisance activation of BESSs, which undermines their efficient operation \cite{syed2019enhanced}. The deficiencies in quantitatively measuring the true value would be exacerbated when turbine-governor nonlinearities are present. To account for the turbine-governor nonlinearities, the concept of Area Injection Error (AIE) has recently been proposed in \cite{9216012}. AIE corrects the ACE to some extent using direct measurements of generator power injections.

To conclude, the following research gaps are observed in the previous studies:
\begin{itemize}
    \item The ACE assumes a static frequency bias and is susceptible to bias uncertainties in practice. The AIE proposed in \cite{9216012} is likely to be closer to the true imbalance and can reduce inter-area oscillations. However, AIE is derived based on a quasi-steady-state approximation and is subject to slow turbine-governor dynamics, resulting in slower convergence than the ACE. Additionally, bias uncertainty on the load side has been overlooked.
	
	\item Most existing control schemes are centralized, with all computations relying on a central controller. However, these methods are cost-inefficient when there are a large number of BESSs to coordinate in real-time due to the high computational and communication requirements on the central controller. More importantly, centralized schemes are vulnerable to a single point of failure and require a powerful computing unit.
	
	\item It is of research interest to investigate optimization in an unknown and highly variable environment. However, the algorithms in \cite{zhao:2020,Zhou:2020} have to adopt a constant learning rate to accommodate their use for ORA. This setting would not suit for ramping reserve allocation and its performance may not have been fully exploited. Unlike \cite{Oshnoei:2020} where BESSs act during load transients only, \cite{zhao:2020} treats them like CGs and would lead to continuous charging/discharging, which could deteriorate their operating integrity.
\end{itemize}

Compared to the previous work, the main contributions and highlights of this paper are summarized as follows:
\begin{itemize}
	\item Inheriting the paradigm of the ACE and utilizing the concept of AIE, we propose a variant of the AIE that removes the quasi-steady-state approximation and introduces a feedback loop to account for the instantaneous bias uncertainty on the generation side. In addition, we incorporate a black-box model, an online interpolated Radial Basis Function (RBF) network \cite{surrogate}, to emulate the bias uncertainty on the load side partially. The proposed AIE exhibits a faster dynamic response than ACE while effectively handling turbine-governor nonlinearities, thereby improving AGC performance even without BESS participation.
	
    \item  By determining AIE as the ramping reserve to be allocated, we propose a novel scheme called ORRA that integrates online learning with OCO, unleashing further the online features for AGC enhancement. During transients, the BESSs act as a complement to the CGs, exploiting their synergistic effect and improving the system's transient behavior, which is a key feature of ORRA. Negligible SoC variations in the long run, approaching energy-neutral operation.
	
    \item A distributed OCO algorithm is developed for ORRA, where a dual-bounded technique \cite{Cao:2021} is integrated to improve compliance with the fast-changing ramping needs. Adaptive learning rates that vary with time, with a two-phase switch mechanism, are developed to cater for both the control and optimization aspects. We prove that, under mild conditions, the algorithm provides guarantees for sublinear dynamic regret and dynamic fit without the use of future or global information that can imply impracticality. Case studies have shown the effectiveness of ORRA in terms of AGC enhancement.
\end{itemize}	

The paper is organized as follows. Section \uppercase\expandafter{\romannumeral2} provides some preliminaries on the interconnected power system. Section \uppercase\expandafter{\romannumeral3} discusses the fundamentals of AGC and the design of the AIE signal. Section \uppercase\expandafter{\romannumeral4} formulates the problem of ramping reserve allocation. In Section \uppercase\expandafter{\romannumeral5}, we present the proposed scheme, the optimization algorithm, and key theoretical results. Comprehensive case studies are presented in Section \uppercase\expandafter{\romannumeral6} to verify its effectiveness through simulations. Finally, we conclude this paper in Section \uppercase\expandafter{\romannumeral7}.

\section{Preliminaries}
\begin{figure*}[h]
	\centering
	\includegraphics[width=1.6\columnwidth]{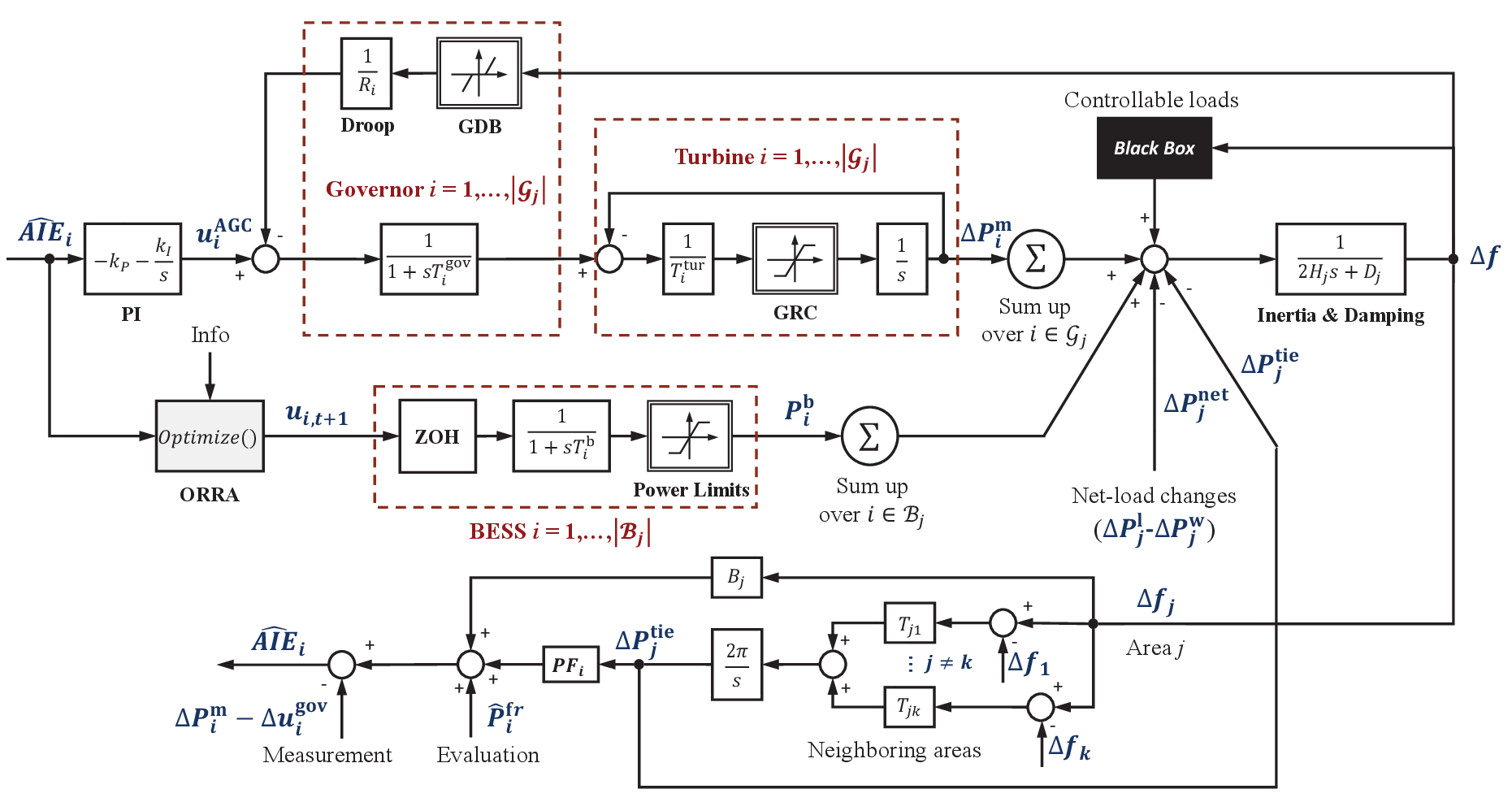}
	\caption{Implementation of AGC in a control area containing multiple CGs and multiple BESSs.}
\end{figure*}
To present the model of an interconnected power system, this section begins with a traditional system that does not account for the penetration of other resources. The interconnected power system is partitioned into multiple control areas, with the set of generator buses denoted by $\mathcal G_j$. A lumped expression for area $j$ can then be obtained \cite{Patel:2019,Rakhshani:2017}
\begin{align}\label{freqency_1}
    \Delta\dot f_j &= \frac{1}{2H_j}\left(\sum_{i\in\mathcal G_j}\Delta P_i^{\rm m} - \Delta P_j^{\rm l} - \Delta P_j^{\rm tie}\right)-\frac{D_j}{2H_j}\Delta f_j,
\end{align}
where $\Delta f_j$ is the frequency deviation of area $j$, $H_j$ is the equivalent system inertia, and $D_j$ is the equivalent system damping \cite{Pathak:2017}. $\Delta P_i^{\rm m}$ is the deviation in mechanical power of the CG connected at bus $i$ relative to an optimizer of economic dispatch. $\Delta P_j^{\rm l}$ is the deviation in loads. An area has either an import or export of power and is tightly coupled with adjacent areas via tie-line power flows $\Delta P_j^{\rm tie}$. The tie-line power flows from area $j$ to its neighboring areas can be presented as follows \cite{Rakhshani:2017}:
\begin{align}
	\Delta\dot P_j^{\rm tie} = \sum_{k\in\mathcal A-\{j\}} 2\pi T_{jk}\left(\Delta f_j-\Delta f_k\right).
\end{align}
where $T_{jk}$ denotes the synchronizing torque between area $j$ and $k$ and $\mathcal A$ is the set of control areas ($\vert\mathcal A\vert\geq 2$).

In this paper, a reduced-order model \cite{Oshnoei:2020} is adopted for the analyzed CG, which consists of a speed governor and a non-reheat steam turbine. In the presence of non-negligible nonlinearities, the turbine-governor response of the CG connected at bus $i\in\mathcal G_j$ can be described by:
\begin{align}
    \Delta\dot P_i^{\rm m} &= \mathscr F_i^{\rm GRC}\left(\frac{\Delta P_i^{\rm gov}-\Delta P_i^{\rm m}}{T_i^{\rm t}}\right),\label{turbine}\\
    \Delta\dot P_i^{\rm gov} &= -\frac{\Delta P_i^{\rm m}}{T_i^{\rm g}} + \frac{1}{T_i^{\rm g}}\left( u_i^{\rm AGC}-\frac{\mathscr F_i^{\rm GDB}(\Delta f_j)}{R_i}\right).\label{governor}
\end{align}
In (\ref{turbine}) and (\ref{governor}), $\mathscr F_i^{\rm GDB}$ and $\mathscr F_i^{\rm GRC}$ describe the Governor Dead-Band (GDB) and Generation Rate Constraint (GRC), which impose a non-negligible nonlinear behavior under particular conditions. $\Delta P_i^{\rm g}$ is the deviation in governor output, $T_i^{\rm t}$ and $T_i^{\rm g}$ are time constants for the turbine and governor, $1/R_i$ is the droop rate for governor speed control, and $u_i^{\rm AGC}$ is the AGC signal generated by passing the ACE or AIE through a PI controller.

\begin{figure}[htbp]
	\centering
	\includegraphics[width=0.7\columnwidth]{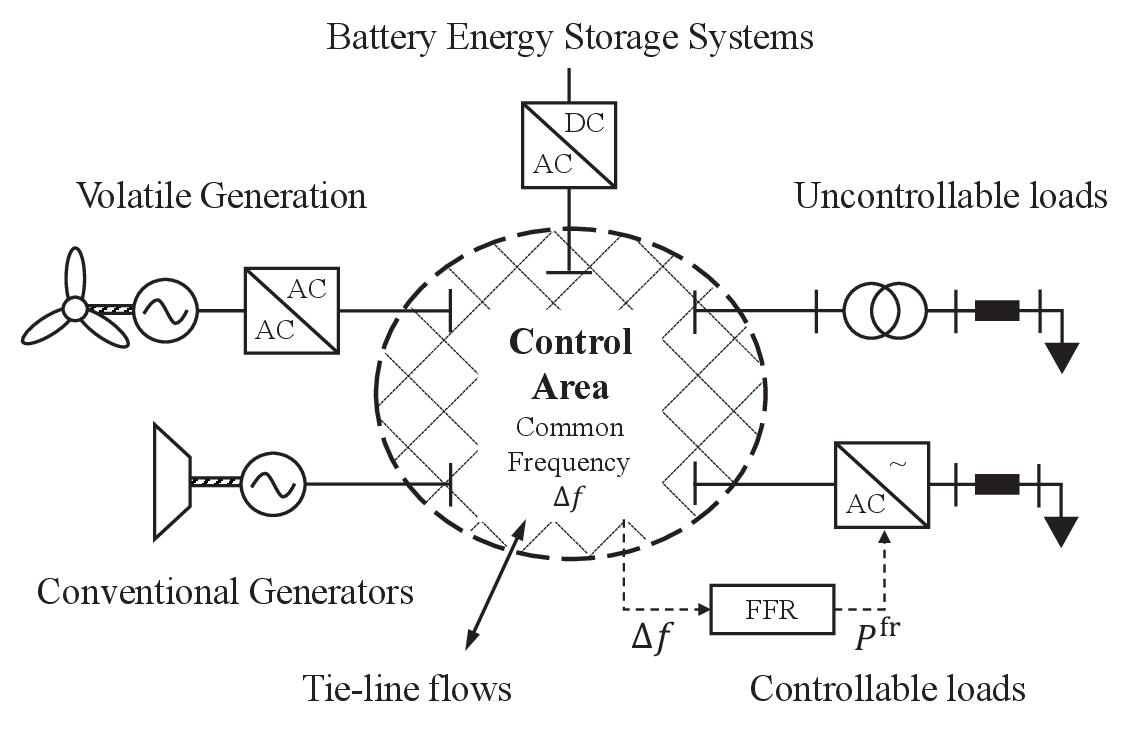}
	\caption{Basic frame of a control area with diverse resource interconnection.}
\end{figure}

To ensure the proper functioning of the power system with an increasing share of RESs, Fast Frequency Reserve (FFR) based on under-frequency load shedding, energy storage \cite{8836644,Shadabi:2022}, direct load control \cite{Zhu:2018,Carne:2019}, etc. have been proposed in the literature, where common control measures include droop \cite{8836644,Carne:2019} (sometimes with dead-zone), sectional droop \cite{Zhu:2018,Datta:2019}, and nonlinear droop \cite{Nutkani:2015,Shadabi:2022}. Given the massive number and diverse composition of frequency-responsive resources, the FFRs $P_i^{\rm fr}$ provided at each bus will be a high level of aggregation
\begin{align}
    P_i^{\rm fr} = \underbrace{K_i^{{P}\text{-}{f}}(\Delta f_j)}_{agg}\Delta f_j,
\end{align}
such that the resulting damping factor $K_i^{{P}\text{-}{f}}$ approaches a sectional droop curve with a large number of segments plus an additional nonlinear function exhibiting different levels of frequency sensitivity. As shown in Fig. 1, it is treated as a ``black-box" in the following analysis.

Considering also the contribution of BESSs in AGC, (\ref{freqency_1}) can be re-arranged as follows to adapt a more general case as shown in Fig. 2:
\begin{align}\label{frequency_2}
    \Delta\dot f_j &= \frac{1}{2H_j}\left(\sum_{i\in\mathcal G_j}\Delta P_i^{\rm m} + \sum_{i\in\mathcal B_j} P_i^{\rm b} - \Delta P_j^{\rm net} - \Delta P_j^{\rm tie}\right)\notag\\
    &\quad-\frac{D_j+\sum_{i\in\mathcal B_j}K_i^{{P}\text{-}{f}}}{2H_j}\Delta f_j,
\end{align}
with the synthetic inertia and damping of volatile generation included in $H_j$ and $D_j$. We replace $\Delta P_j^{\rm l}$ with a new term $\Delta P_j^{\rm net}$ to represent the net-load variation, i.e., variation in load minus variation in volatile generation. Moreover, $P_i^{\rm b}$ represents the instantaneous power output of the BESS connected at bus $i$. It is derived by passing the reference signal through a zero-order holder and a first-order transfer function with output saturation.

\section{Improved AIE for AGC}
When a load perturbation takes place, the CGs and BESSs are obliged to respond to the ACE. The ACE is obtained as the difference between scheduled and actual tie-line power flows $\Delta P_j^\textrm{tie}$ plus a scaled frequency deviation $\Delta f_j$. The ACE for an area is given by
\begin{align}\label{ACE}
	ACE_j &= \Delta P_j^\textrm{tie} + B_j\Delta f_j,
\end{align}
where $B$ represents the frequency bias. It is set that $B_j = D_j + R_j^{-1}$ to make the numerical value of the ACE physically meaningful. Such a static bias setting rests on several assumptions taking place only under ideal conditions. To facilitate our analysis, we denote $\mathcal B_j$ as the set of buses and $\mathcal G_j$ the set of buses with generator.

\subsection{Bias Uncertainty from Turbine-Governor Nonlinearities}
Generally, ACE provides a proxy error signal for the true area imbalance. However, its deficiencies in quantitatively measuring the true value are further evident when turbine-governor nonlinearities are present. As indicated in equations (\ref{turbine}) and (\ref{governor}), explicitly obtaining an analytical expression for the dynamic response of turbine-governor systems can be very difficult. To this end, the AIE emerged as a new concept to account for the bias uncertainty from turbine-governor nonlinearities \cite{9216012}. For ease of analysis, we denote the governor input in incremental form
\begin{align}\label{u_gov}
    \Delta u_i^{\rm gov}\coloneqq u_i^{\rm AGC} - \frac{\mathscr F_i^{\rm GDB}(\Delta f_j)}{R_i}.
\end{align}
By subtracting $\Delta P_i^m$ from both sides of (\ref{u_gov}), we have
\begin{align}\label{quasi}
	\Delta u_i^{\rm gov}-\Delta P_i^{\rm m} &= u_i^{\rm AGC}-\Delta P_i^{\rm m} - \frac{\mathscr F_i^{\rm GDB}(\Delta f_j)}{R_i}.
\end{align}
At quasi-steady-state, $\Delta P_i^{\rm m}\approx\Delta u_i^{\rm gov}$ and hence
\begin{align}
	u_i^{\rm AGC}-\Delta P_i^{\rm m} &\approx \frac{\mathscr F_i^{\rm GDB}(\Delta f_j)}{R_i},
\end{align}
which provides an approach to remove the GDB from signal calculation using direct measurement of $\Delta P_i^\textrm{m}$. Thus, the AIE in \cite{9216012} is constructed as
\begin{align}\label{AIE}
	AIE_j=\Delta P_j^\textrm{tie} + D_j\Delta f_j + \sum_{i\in\mathcal G_j}\left(u_i^{\rm AGC}-\Delta P_i^\textrm{m}\right).
\end{align}
While the AIE has been shown to provide a more accurate estimate of the true area imbalance and reduce inter-area oscillations compared to ACE, its use of the quasi-steady-state approximation can result in slower convergence rates than ACE-based AGC. To address this issue, we propose a modified AIE algorithm that removes the quasi-steady-state approximation while retaining the structure of traditional ACE and incorporating the idea of AIE:
\begin{align}\label{improved_AIE}
	AIE_j &=\sum_{i\in\mathcal G_j} \left(\frac{1}{R_i}-\frac{\Delta P_i^{\rm m}-\Delta u_i^{\rm gov}}{\Delta f_j}\right)\Delta f\notag\\
	&\quad+\Delta P_j^\textrm{tie} + D_j\Delta f_j\notag\\
	&= ACE_j - \sum_{i\in\mathcal G}\left(\Delta P_i^\textrm{m}-\Delta u_i^{\rm gov}\right),
\end{align}
Considering a participating factor $\sigma_i$, for $i\in\mathcal G$ we have
\begin{align}
	AIE_i = \sigma_i ACE_j - \left(\Delta P_i^\textrm{m}-\Delta u_i^{\rm gov}\right),
\end{align}
where $\sum_{i\in\mathcal G}\sigma_i = 1$, while $(\Delta P_i^{\rm m}-\Delta u_i^{\rm gov})/\Delta f_j$ quantifies the instantaneous bias uncertainty that arises during transients, which implicitly leads to a dynamic bias setting. It is worth noting that $AIE_i=0$ for bus without BESS.

\begin{remark}
GDBs are generally classified as either unintentional or intentional. Unintentional GDBs are a result of the inherent mechanical effects of turbine-governor systems, such as sticky valves or loose gears. On the other hand, intentional GDBs are deliberately introduced in governor droop designs to reduce excessive regulation efforts and mechanical wear. In this paper, we consider an intentional GDB of 36mHz, making $\Delta u_i^{\rm gov}$ readily accessible. However, we note that the AIE design can also be extended to handle unintentional GDBs by linearizing the corresponding transfer functions in Fourier space \cite{Khezri:2019}, which allows for an estimation of $\Delta u_i^{\rm gov}$.
\end{remark}

\subsection{Bias Uncertainty from Fast Frequency Response}
Subsequent to the increasing deployment of FFRs \cite{Shadabi:2022, Zhu:2018, Carne:2019}, additional load damping has been introduced into existing power systems, resulting in bias uncertainty on the load side \cite{7882719}. However, quantifying this additional load damping is challenging, as its estimated value may only be valid for the frequency condition for which it was derived. Merely considering a static $B_j$ is insufficient to reflect $\sum_{i\in\mathcal B_j}K_{i}^{{P}\text{-}{f}}$ that can vary considerably with the clearance of intraday markets \cite{8836644}. To address this issue, it would be beneficial to incorporate also the frequency responsive $\sum_{i\in\mathcal B_j}P_i^{\rm fr}$ when calculating the AIE. We would like to clarify that our focus is on evaluating the aggregated $P$-$f$ characteristics of FFRs, and the frequency responses of wind turbines are considered as unpredictable bias uncertainty.

To avoid the extra expenses associated with real-time monitoring of $P_i^{\rm fr}$ and to make it more generalizable, we propose to use online interpolated Radial Basis Function (RBF) networks \cite{surrogate} to produce a local approximation. The aim is not to precisely model but rather to emulate $P_i^\textrm{fr}$ based on a limited number of evaluations. We assume that the $P$-$f$ characteristics remain unchanged until the next market clearing, and the learning process is restarted for each intraday market interval. For bus agent $i$, we denote the datasets by $\mathcal S_i^{\rm f}\coloneqq[\Delta f_{i,1},...,\Delta f_{i,M}]^\top$ and $\mathcal S_i^{\rm P}\coloneqq[\Delta P_{i,1}^\textrm{fr},...,\Delta P_{i,M}^\textrm{fr}]^\top$. During real-time operation, these datasets are gradually expanded by latest information if certain conditions are met. Thereby, we make the following improvement:
\begin{align}
	\widehat{AIE}_i =  AIE_i + \sum_{m=1}^M\omega_{i,m}\phi(\Vert\Delta f_j-\Delta f_{i,m}\Vert),
\end{align}
where the second term is the RBF interpolant, $\omega_{i,m}$ is a weighting factor that needs to be determined individually for each neuron, and $\phi(x)$ is the Gaussian basis function
\begin{align}
	\phi(x) = \exp(-\xi x^2), \quad\xi\in\mathbb R_{>0}.
\end{align}

\begin{figure}[htbp]
	\centering
	\includegraphics[width=0.98\columnwidth]{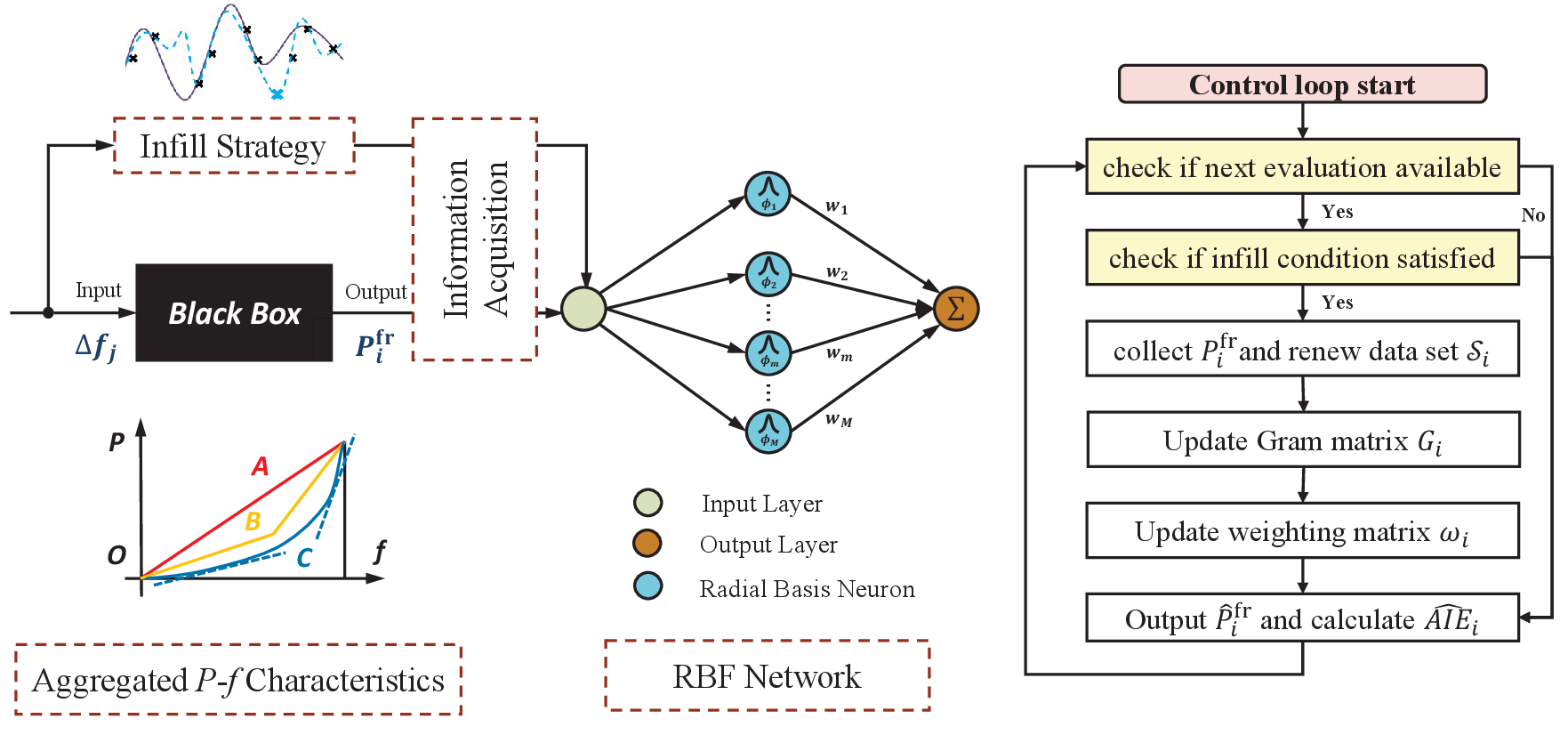}
	\caption{Structure of the online interpolated RBF network.}
\end{figure}

As illustrated in Fig. 3, at the start of a new control interval ($\tau = 0.1s$), each bus agent identifies whether the next evaluation of $\Delta P_i^{fr}$ should be conducted. If so, the local FFRs will be collected by the bus agent along with the current area frequency measurement. To allow for sufficient time for data collection, two contiguous evaluations are kept for at least 5 seconds from each other. A distance-based infill method is adopted from our previous work \cite{surrogate} to determine evaluation points for model improvement. The idea is to ensure that the next evaluation point is held at an adequate distance from the previously evaluated points: 
\begin{align}
    \Vert\Delta f_{j}-\Delta f_{i,m}\Vert\geq \epsilon_{i,M}\mathcal D_{i,\max},\quad\forall m=1,\dots,M,
\end{align}
where $\mathcal D_{i,\max}$ represents the maximal distance that can be reached for the next evaluation, and $\epsilon_{i,M}$ is a coefficient for balancing between exploration and exploitation. The reader is referred to \cite{surrogate} for more details.

The interpolation matrix, also referred as Gram matrix, is updated according to
\begin{align}
	[G_i]_{rc} = \phi(\Vert\Delta f_{i,r}-\Delta f_{i,c}\Vert),\quad\forall r,c=1,...,M,
\end{align}
and the weighting matrix, denoted by $\omega_i=[\omega_{i,1},...,\omega_{i,M}]^\top$, is determined according to
\begin{align}
	\omega_i = (G_i^\top)^{-1}\mathcal S_i^{\rm P}.
\end{align}
There always exists a unique $\omega_i$ such that the RBF interpolant can reproduce observed behaviors.

\section{Problem Formulation of Ramping Reserve Allocation}
The variability and uncertainty of RESs introduce significant challenges to existing power systems. The occurrence of ramp capability shortages in AGC has put a demand on the implementation of fast-ramping reserves. This section describes the ramping reserve allocation problem, where the AIE signals are adopted as the ramping reserve to be allocated among BESSs, as depicted in Fig. 4.
\begin{figure}[htbp]
	\centering
	\includegraphics[width=0.8\columnwidth]{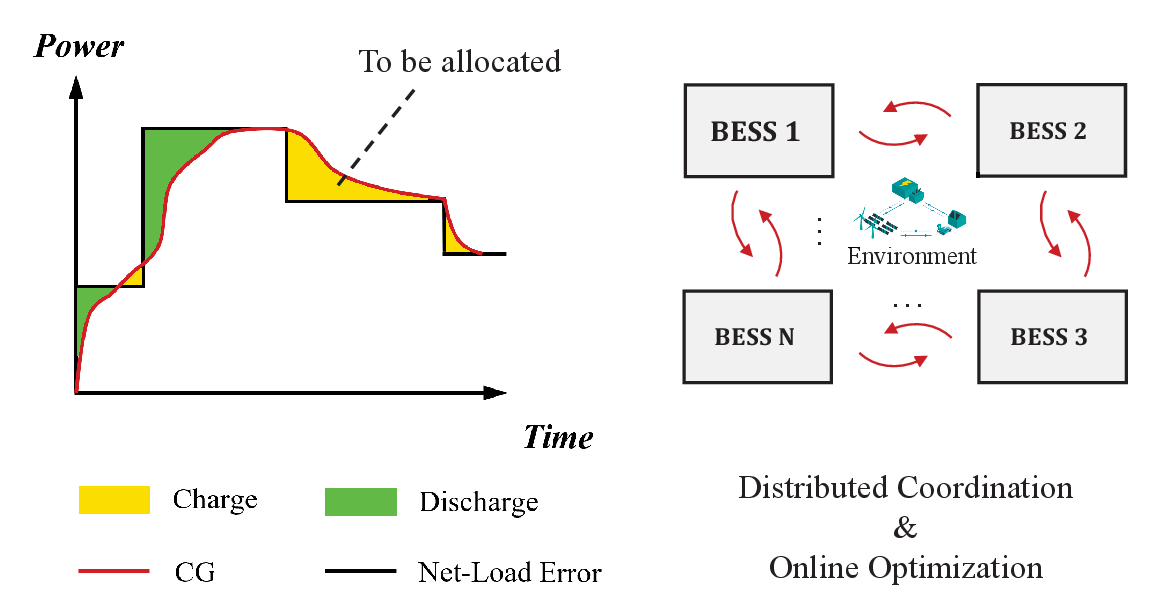}
	\caption{Schematic overview of ramping reserve allocation.}
        \label{overview_allocation}
\end{figure}

\subsection{BESS Model}
Consider a battery operation defined over discrete time, where each control interval has a duration of $\tau$. For $i\in\mathcal B_j$, i.e., the BESS connected at bus $i$, its SoC at the next time instant $k+1$ can be described using a linear difference equation:
\begin{align}\label{soc}
	x_{i,k+1} = x_{i,k} + \frac{\eta^\textrm c\tau}{E_i} c_{i,k+1} - \frac{\tau}{\eta^\textrm d E_i} d_{i,k+1},
\end{align}
where $x_{i,k+1}$ and $x_{i,k}$ are the SoC levels of agent $i$ at time instant $k$ and $k+1$, respectively; $\eta^\textrm c$ and $\eta^\textrm d$ are the charging/discharging efficiencies; $E_i$ is the rated capacity; $c_{i,k+1}$ and $d_{i,k+1}$ denote the reference signals for charging and discharging and are treated as equivalent to the instantaneous BESS powers in this formulation, provided that the internal control loops are fast enough. 

Each BESS can either operate in charging or discharging mode. Irrespective of the model used, one has to avoid simultaneous charging and discharging for efficiency considerations. We introduce a binary variable $\delta_{i,k}$, whose value at time instant $k$ is assigned according to
\begin{align}
	\delta_{i,k} = \left\{
	\begin{aligned}
		& \left(\widehat{AIE}_{i,k}/\vert\widehat{AIE}_{i,k}\vert+1\right)/2,\quad i\in\mathcal G_j,\\
		&\delta_{l,k-\textrm{dist}(i,l)},\quad i\in\mathcal B_j-\mathcal G_j,
	\end{aligned}
	\right.
\end{align}
where relay communication is considered for passing this binary variable to BESSs without access to the AIE signal, and $\textrm{dist}(i,l)$ describes the communication delay between bus $i$ and bus $l$ where $l\in\mathcal G_j$ exhibits the shortest path to $i$. Then, $\delta_{i,k}$ determined at time instant $k$ is used to decide whether charge or discharge at the next time instant $k+1$ by setting bounds on $c_{i,k+1}$ and $d_{i,k+1}$:
\begin{align}
	& 0\leq c_{i,k+1}\leq\left(1-\delta_{i,k}\right)\overline{c_i},\\
	& 0\leq d_{i,k+1}\leq\delta_{i,k}\overline{d_i},
\end{align}
such that the BESS is charged if $\delta_{i,k}=0$ and discharged if $\delta_{i,k}=1$, where $\overline {c_i}$ and $\overline {d_i}$ denote the BESS power limits.

To avoid over-charging and over-discharging, the SoC of each agent needs to be restricted within an appropriate range:
\begin{align}
	\underline{x_i}\leq x_{i,k}+\frac{\eta^\textrm c\tau}{E_i} c_{i,k+1} - \frac{\tau}{\eta^\textrm d E_i} d_{i,k+1}\leq\overline{x_i},
\end{align}
where $\underline{x_i}$ and $\overline{x_i}$ are the minimal and maximal SoC levels.

\subsection{Cost Model}
Cycling aging refers to a natural process leading to permanent battery degradation and is related to the depth for which a battery is cycled. The resultant cost of cycling aging is usually omitted \cite{8454327,Oshnoei:2020,8935195} or approximated through a simplified model \cite{zhao:2020,SUN2020115589,8805438}. We adopt a semi-empirical model that combines cycle identification results with experimental data \cite{8449100}.
Using the well-known rainflow-counting algorithm (due to space limits, please refer to \cite{AMZALLAG1994287}), we can identify the cycle depth of the latest half-cycle per iteration
\begin{align}
	(\mu_{i,k},\mathcal R_{i,k+1}) = \text{Rainflow}(x_{i,k},\mathcal R_{i,k}),
\end{align}
where $\mu_{i,k}$ is the cycle depth between the last two residues, $\mathcal R_{i,k+1}$ is the updated set of residues (the extrema unremoved by the rainflow-counting algorithm), and $x_{i,k}$ is the latest SoC information, which together with $\mathcal R_{i,k}$ actually converts SoC trajectories that entail non-uniform fluctuations into consecutive cycles that can be full or half. A full cycle consists of a charge half-cycle and a discharge half-cycle, and it might be nested within other cycles once new SoC samples are acquired.

Subsequently, we are able to characterize the battery lifetime loss with respect to the identified half-cycle as
\begin{align}
	\Delta L_{i,k}(d_{i,k},c_{i,k})\coloneqq\frac{n_{i,k}^\textrm{cyc}}{2}a \mu_{i,k}^b,
\end{align}
where $a$ and $b$ are empirical coefficients that normalize the cycling aging for a full cycle between 0 and 1, while $n_{i,k}^\textrm{cyc}\in(0,1]$ calculates the number of cycles from the time indexes of the latest two residues. Note that $\Delta L_{i,k}$ is a convex function of $\mu_{i,k}$ and, by the chain rule, also a convex function of $d_{i,k}$ and $c_{i,k}$ \cite{8449100}. Additional quadratic terms on the BESS powers quantify the power wear \cite{8274099}. As a result, the battery usage cost (\$/h) is given as 
\begin{align}\label{cost_function}
\begin{split}
    f_{i,k}(d_{i,k},c_{i,k})&\coloneqq\underbrace{\theta_i^\textrm{a}\cdot(3600/\tau)\cdot\Delta L_{i,k}(d_{i,k},c_{i,k})}_\textrm{Cycling aging cost}\\
    &\quad + \underbrace{\theta_i^\textrm{b}\cdot(d_{i,k}-c_{i,k})^2}_\textrm{Power wear cost}.
\end{split}
\end{align}
The calendar aging independent of charge-discharge cycling is omitted as it is beyond the time frames of ORRA.

\subsection{Optimization Problem Formulation}
Consider $N$ BESSs that are installed across the area. Each of them is managed by the local bus agent, which cannot reveal its cost function to the others. As illustrated in Fig. 5, the entire operation from time instant $0$ to time instant $k+1$ can be divided into a number of optimization stages separated by the reset of learning rates, which will be covered in \uppercase\expandafter{\romannumeral5}. A. Meanwhile, we use $t$ as an index for the current optimization stage, with iteration $0$ denoting its beginning and iteration $t+1$ denoting the current position of optimization. For instance, $(\cdot)_{i,t}$ in the current optimization stage is treated as equivalent to $(\cdot)_{i,k}$ in the entire operation.
\begin{figure}[htbp]
	\centering
	\includegraphics[width=0.98\columnwidth]{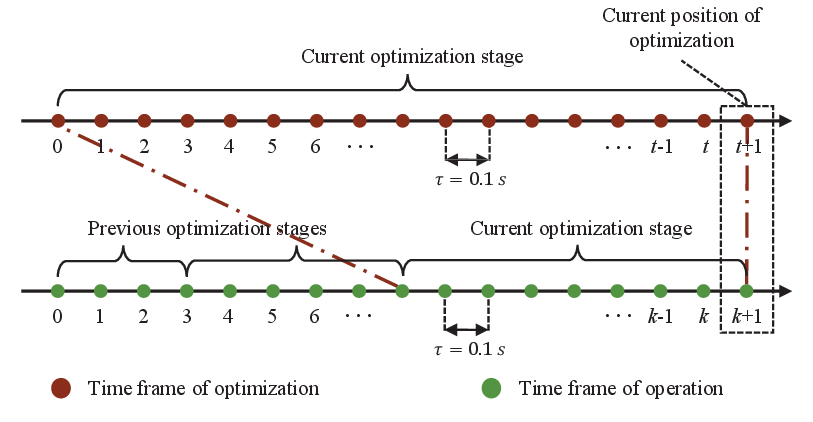}
	\caption{Time frames of optimization and entire operation of BESS. Iteration $t+1$ at current optimization stage corresponds to time instant $k+1$ for the operation span.}
\end{figure}

In terms of cost minimization, the overall optimization problem can be mathematically modeled as follows, which is convex with time-varying constraints
\begin{subequations}\label{problem}
\begin{align}
	&\min_{d_{i,t+1},c_{i,t+1}} \sum_{i=1}^{N}f_{i,t+1}(d_{i,t+1},c_{i,t+1})\\
	&\quad{\rm s.t.} \sum_{i=1}^N(d_{i,t+1}-c_{i,t+1}) = -\sum_{i=1}^N\widehat{AIE}_{i,t},\\
	&\qquad 0\leq c_{i,t+1}\leq\left(1-\delta_{i,t}\right)\overline c_i,\\
	&\qquad 0\leq d_{i,t+1}\leq\delta_{i,t}\overline d_i,\\
	&\qquad \underline x_i\leq x_{i,t}+\frac{\eta^\textrm c\tau}{E_i} c_{i,t+1} - \frac{\tau}{\eta^\textrm d E_i} d_{i,t+1}\leq\overline x_i,
\end{align}
\end{subequations}
where (\ref{problem}a) focuses on the real-time cost-effectiveness of AGC enhancement. Here we would like to state that problem (\ref{problem}) has to be solved in an online manner due to the lack of global information and explicit prediction models \cite{Oshnoei:2020}. For iteration $t$, each BESS has to first interact with the environment (i.e., implement control), and only after this can it observe the resultant cost through the rainflow-counting algorithm and access local and neighbors' AIE signals. In turn, the observations at iteration $t$ will be utilized to correct the previous decision at the next iteration $t+1$ so as to counteract the AIE better.
\begin{remark}
    We consider a two-stage market model generalized for pay-for-performance market designs \cite{8449100}. A participant is pre-paid in the first stage and will be penalized for being unable to fulfill the regulation requirement. Similar to the augmented Lagrangian formulation, whether or not including the regulation penalty in (\ref{problem}a) does not change the optimal solution. Thus, the regulation penalty and the constant payment are not presented in (\ref{cost_function}). 
\end{remark}

\section{Proposed Scheme}
\subsection{Distributed OCO}
First of all, for the sake of generality, we denote
\begin{subequations}
\begin{align}
	&u_{i,t+1}\coloneqq[d_{i,t+1},-c_{i,t+1}]^\top,\\ &h_{i,t}(u_{i,t+1})\coloneqq\textbf 1_2^\top u_{i,t+1}+\widehat{AIE}_{i,t},
\end{align}
\end{subequations}
where $\textbf 1_2 = (1,1)\in\mathbb R^2$. Replace (\ref{problem}c)--(\ref{problem}e) with projection operation which projects $u_{i,t+1}$ into its decision domain to meet the inequality constraints. The Lagrangian function is
\begin{align}\label{lagrangian}
	L_t(u_t,\lambda) = \sum\nolimits_{i=1}^{N}f_{i,t}(u_{i,t}) + \lambda\sum\nolimits_{i=1}^{N}h_{i,t}(u_{i,t}),
\end{align}
where $\lambda$ is the dual variable of problem (\ref{problem}). For convex-constrained optimization problems, under Slater’s condition, a necessary and sufficient condition for primal-dual optimality is being the saddle point of the Lagrangian:
\begin{align}
	\frac{\partial L_{t}}{\partial u_{i,t}} &= (\frac{\partial f_{i,t}}{\partial d_{i,t}}, -\frac{\partial f_{i,t}}{\partial c_{i,t}}) + \textbf 1_2\lambda,\label{eq20}\\
	\frac{\partial L_{t}}{\partial\lambda} &= \sum_{i=1}^N h_{i,t}(u_{i,t}).\label{eq21}
\end{align}

It is evident from the following formula that the gradients involve global information such as $\lambda$ and $\sum_{i=1}^N h_{i,t}(u_{i,t})$. Modifications to the Arrow-Hurwicz-Uzawa algorithm are required. Consider a two-way network for agent communication. Two agents are said to be neighboring if there exists a communication link between them. We introduce a matrix $W=[w_{ij}]$ to model the communication topology by setting $w_{ij}\in\mathbb R_{>0}$ for neighboring agents $i$,$j$ and $w_{ij}=0$ otherwise. Note that $W$ needs to be doubly-stochastic, that is to say, $\sum_{i=1}^N w_{ij} = \sum_{i=1}^N w_{ji} = 1$.

Two auxiliary variables are introduced as the local estimates of the global information $\lambda$ and $\sum_{i=1}^N h_{i,t}(u_i)$ for each agent
\begin{align}\label{eq28}
    \tilde\lambda_{i,t}\coloneqq \sum\nolimits_{j=1}^{N}w_{ij}\lambda_{j,t},\quad\tilde y_{i,t}\coloneqq\sum\nolimits_{j=1}^{N}w_{ij}y_{j,t},
\end{align}
which compute the weighted averaging of local and neighbors' information.

Then, a local estimation of $\frac{\partial f_{i,t}}{\partial u_{i,t}}$ is given by
\begin{align}
	s_{i,t} = (\frac{\partial f_{i,t}}{\partial d_{i,t}}, -\frac{\partial f_{i,t}}{\partial c_{i,t}}) + \textbf 1_2\tilde\lambda_{i,t}.\label{algorithm_s}
\end{align}
Based on the above, we present an optimization algorithm for ORRA, which is executed in a distributed, online fashion and summarized in Algorithm 1.

The proposed algorithm incorporates adaptive learning rates that incorporates a two-phase switching mechanism to adapt OCO to dynamic control problems. The learning rate $\eta_t$ that monotonically decreases within Phase 1 is introduced by virtue of the dual-bounded technique in \cite{Cao:2021}. The basic idea is to impede the growth of $\lambda_{i,t}$ by introducing an additional term $\eta_t\Vert\lambda_t\Vert^2$ to the Lagrangian function. As a result, $\lambda_{i,t}$ can be tightly bounded within a certain range, and more emphasis is placed on the level of constraint compliance, contributing to reduced constraint violations while preserving the optimality of the final results. This approach provides a specialized treatment for dynamic fit in response to load transients, as it accounts for the changing nature of the system \cite{9184135}. However, the algorithm in \cite{9184135} would virtually terminate as learning rates diminishing to zero monotonically, hence conflicting with the goal of dynamic control at fundamental aspects. On the other hand, \cite{zhao:2020} and \cite{Zhou:2020} compromise for a constant learning rate $\kappa\in\mathcal O_{+}(1/\sqrt \mathcal T)$ that may fail to fully exploit the fast-ramping characteristics of BESSs. To cater to the technical requirements of both OCO and dynamic control, we integrate the principles of both designs by considering adaptive learning rates $\kappa_t$ and $\eta_t$, for which a two-phase switch mechanism is introduced. When $t$ approaches the iteration threshold $\mathcal T$, the algorithm transitions from Phase 1 to Phase 2, as illustrated in Algorithm 1. If $\vert\Delta f_t\vert$ exceeds the frequency threshold ${\rm THR}_f$, it indicates the start of a new optimization stage which starts at Phase 1. The iteration count is reset and a larger step size is used to accelerate the ramp-up/down of BESSs.

\begin{figure}
	\removelatexerror
	\begin{algorithm}[H]
		\caption{Proposed Algorithm for ORRA}
		\LinesNumbered 
		\KwIn{Parameters $\alpha$, $\beta$, $\gamma$, $\kappa_0$, $\eta_0\in\mathbb R_{>0}$}
		Initialization: $\kappa_t=\kappa_0$, $\eta_t=\eta_0$, $u_{i,0}\in\Omega_{i,0}$, $\lambda_{i,0} = 0$, $y_{i,0} = \textbf{1}_2^\top u_{i,0}+\widehat{AIE}_{i,0}$\;
            Let $t\leftarrow 1$\;
		\While{\rm TRUE}
		{\eIf{$1\leq t<\mathcal T$}{Phase 1: $\kappa_t=\kappa_0t^{-\alpha}$, $\eta_t=\eta_0t^{-\beta}$\;}{Phase 2: $\kappa_t=\kappa_0\mathcal T^{-\alpha}$, $\eta_t=0$\;}{
			\For{$i=1,...,N$}{
                     Update RBF network if the conditions are met\;
                    Obtain online interpolant using current $\Delta f$ and calculate $\widehat{AIE}_{i,t}$\;
				Calculate $\tilde\lambda_{i,t}$, $\tilde y_{i,t}$, and $s_{i,t}$\;
				Update $u_{i,t+1}$ and $\lambda_{i,t+1}$:
				\begin{align}
					u_{i,t+1} &= \mathcal P_{\Omega_{i,t}}(u_{i,t}-\kappa_t s_{i,t});\label{algorithm_u}\\
					\lambda_{i,t+1} &= (1-\eta_t)\tilde\lambda_{i,t} + \gamma\kappa_t\tilde y_{i,t};\label{algorithm_lambda}
				\end{align}
			
				Incorporate the AIE signals:
				\begin{align}
				    \Delta h_{i,t} = \textbf 1_2^\top\left(u_{i,t+1}-u_{i,t}\right)\\
				   +\widehat{AIE}_{i,t}-\widehat{AIE}_{i,t-1};\notag
				\end{align}
				
				Update $y_{i,t+1}$:
				\begin{align}	
					y_{i,t+1} &=\tilde y_{i,t} + \Delta h_{i,t};\label{algorithm_y}
				\end{align}
				}
				\eIf{$\vert\Delta f_t\vert<{\rm THR}_{f}$}{Let $t \leftarrow t+1$\;}{Reset $t\leftarrow1$\;}
		}
		}
	\end{algorithm}
\end{figure}

\begin{remark}
    Projection operation $\mathcal P_{\Omega_{i,t}}$ in (\ref{algorithm_u}) is included to project decision variable $u_{i,t+1}$ into its domain $\Omega_{i,t}$. Parameters $\kappa_0,\eta_0,\gamma$ need to be tuned for satisfactory step sizes, and a careful balance is necessary from the convergence and stability perspective. Local information of all bus agents, namely $\lambda_{i,t}$ and $y_{i,t}$, are shared via the sparse communication network to steadily enhance ORRA's perception of global information per iteration. At steady-state, we have $\lambda_t\rightarrow\bar\lambda_t$ and $y_t\rightarrow\bar y_t$ (also, $\tilde\lambda_t\rightarrow\bar\lambda_t$ and $\tilde y_t\rightarrow\bar y_t$), where $\bar\lambda_t\coloneqq\textbf 1_N\sum_{i=1}^N \lambda_{i,t}/N$ and $\bar y_t\coloneqq\textbf 1_N\sum_{i=1}^N y_{i,t}/N$. As CGs slightly adjust their outputs to cover the net-load variation, the BESSs will gradually withdraw their contribution to AGC. This will ultimately lead to $d_t = \textbf 0_N$ and $c_t = \textbf 0_N$ if there are no further perturbations, which is the appearance of energy-neutral operation. Network constraints can also be taken into account when performing OCO \cite{Dall’Anese:2018,Patel:2019}.
\end{remark}

\subsection{Convergence Analysis}
Due to the fast-changing regulation requirement of ORRA, dynamic regret and dynamic fit are introduced to define its convergence. The dynamic regret is a performance metric computed for each iteration and summed up to measure how much the distributed solution deviates from the optimal trajectory from a centralized view.

We consider a complete optimization stage consisting of Phase 1 and Phase 2, with duration of $1<T<\mathcal T$ and $T'>1$, respectively. The dynamic regret is defined as
\begin{align}
	Reg(T+T') &= \sum_{t=1}^{T+T'}\sum_{i=1}^{N}f_{i,t}(u_{i,t})-\sum_{t=1}^{T+T'}\sum_{i=1}^{N}f_{i,t}(u_{i,t}^\star),\label{reg}
\end{align}
and the dynamic fit is introduced to quantify the overall constraint violations, which is the non-compliance with the regulation requirement
\begin{align}
	Fit(T+T') &= \sum_{t=1}^{T+T'}\sum_{i=1}^{N}h_{i,t}(u_{i,t}),\label{fit}
\end{align}
where $u_t^{\star}={\rm argmin}_{u_t\in\Omega_t}\sum_{i=1}^Nf_{i,t}(u_{i,t})$.
\begin{lemma}
    For Phase 1, the following inequality always holds
    \begin{align}\label{eq36}
        Reg_1(T)&\leq\sum_{t=1}^T\frac{1}{2\kappa_t}(\Vert u_t-u_t^\star\Vert^2-\Vert u_{t+1}-u_t^\star\Vert^2)\notag\\
        &\quad +\sum_{t=1}^T\frac{1}{2\gamma\eta_t}(\Vert\bar\lambda_t\Vert^2-\Vert\bar\lambda_{t+1}\Vert^2)\\
        &\quad +\sum_{t=1}^T\frac{\kappa_t}{2}\Vert s_t\Vert^2 + \sum_{t=1}^T\frac{1}{2\gamma\kappa_t}\Vert\gamma\kappa_t\tilde y_t-\eta_t\bar\lambda_t\Vert^2\notag\\
        &\quad +\sum_{t=1}^T\Vert\bar\lambda_t\Vert\cdot\Vert\tilde y_t-\bar y_t\Vert +\sum_{t=1}^T 2\Vert u_t\Vert\cdot\Vert\tilde\lambda_t-\bar\lambda_t\Vert\notag.
    \end{align}
\end{lemma}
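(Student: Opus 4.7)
The plan is to derive (28) by combining a standard primal-dual descent argument with a careful accounting of the consensus errors introduced by the doubly stochastic averaging in (22). First I linearize the instantaneous regret using convexity of $f_{i,t}$, so that $\sum_i[f_{i,t}(u_{i,t}) - f_{i,t}(u_{i,t}^\star)] \le \sum_i \langle \nabla f_{i,t}(u_{i,t}), u_{i,t} - u_{i,t}^\star\rangle$. Substituting $\nabla f_{i,t}(u_{i,t}) = s_{i,t} - \mathbf{1}_2\tilde\lambda_{i,t}$ from (23) and using $\mathbf{1}_2^\top(u_{i,t}-u_{i,t}^\star) = h_{i,t}(u_{i,t}) - h_{i,t}(u_{i,t}^\star)$ (a direct consequence of (18)) splits the right-hand side into a ``gradient'' piece $\sum_i \langle s_{i,t}, u_{i,t}-u_{i,t}^\star\rangle$ and a ``dual'' piece $-\sum_i \tilde\lambda_{i,t}[h_{i,t}(u_{i,t}) - h_{i,t}(u_{i,t}^\star)]$. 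Feasibility of the offline optimum in (17b) gives $\sum_i h_{i,t}(u_{i,t}^\star)=0$, which simplifies the dual piece once the consensus decomposition below is applied.

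For the gradient piece I use non-expansiveness of the projection in (24): since $u_{i,t}^\star \in \Omega_{i,t}$, expanding $\|u_{i,t+1}-u_{i,t}^\star\|^2 \le \|u_{i,t}-\kappa_t s_{i,t}-u_{i,t}^\star\|^2$ and rearranging yields the textbook descent inequality
\[
\langle s_{i,t}, u_{i,t}-u_{i,t}^\star\rangle \le \tfrac{1}{2\kappa_t}\bigl(\|u_{i,t}-u_{i,t}^\star\|^2-\|u_{i,t+1}-u_{i,t}^\star\|^2\bigr)+\tfrac{\kappa_t}{2}\|s_{i,t}\|^2,
\]
which, after summation over $i$ and $t$, contributes precisely the first and third lines of (28).

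For the dual piece I split $\tilde\lambda_{i,t}=\bar\lambda_t+(\tilde\lambda_{i,t}-\bar\lambda_t)$. The disagreement contribution is immediately bounded by Cauchy--Schwarz, and since $\|h_{i,t}(u_{i,t})-h_{i,t}(u_{i,t}^\star)\|$ is controlled by $\|u_{i,t}\|$ up to $\mathbf{1}_2$, this produces the network-error term $2\|u_t\|\|\tilde\lambda_t-\bar\lambda_t\|$. The consensus contribution reduces to $\bar\lambda_t \sum_i h_{i,t}(u_{i,t})$; the $y$-tracking recursion (26), together with the doubly stochastic $W$ and the initialization $y_{i,0}=h_{i,0}(u_{i,0})$, preserves $\bar y_t = \tfrac{1}{N}\sum_i h_{i,t}(u_{i,t})$, and passing from $\bar y_t$ to $\tilde y_t$ gives a Cauchy--Schwarz residual $\|\bar\lambda_t\|\|\tilde y_t-\bar y_t\|$. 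Finally, the remaining $\bar\lambda_t\bar y_t$ is converted by expanding the averaged dual recursion $\bar\lambda_{t+1}-\bar\lambda_t = \gamma\kappa_t\bar y_t-\epsilon_t\bar\lambda_t$ through the identity $\|\bar\lambda_{t+1}\|^2 = \|\bar\lambda_t\|^2 + 2\bar\lambda_t^\top(\bar\lambda_{t+1}-\bar\lambda_t) + \|\bar\lambda_{t+1}-\bar\lambda_t\|^2$, after which appropriate rescaling by $\gamma\epsilon_t$ and $\gamma\kappa_t$ yields the telescoping term $\tfrac{1}{2\gamma\epsilon_t}(\|\bar\lambda_t\|^2-\|\bar\lambda_{t+1}\|^2)$ and the quadratic residual $\tfrac{1}{2\gamma\kappa_t}\|\gamma\kappa_t\tilde y_t-\epsilon_t\bar\lambda_t\|^2$.

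Summing the pointwise inequality from $t=1$ to $T$ then assembles exactly the six terms on the right-hand side of (28). The routine parts are elementary; the main obstacle will be the bookkeeping of the index shifts in (26), which couples $h_{i,t}(u_{i,t+1})$ with $h_{i,t-1}(u_{i,t})$, so that the average-tracking identity $N\bar y_t=\sum_i h_{i,t}(u_{i,t})$ is maintained and only the two stated disagreement norms $\|\tilde y_t-\bar y_t\|$ and $\|\tilde\lambda_t-\bar\lambda_t\|$ survive as network residuals. Keeping $\kappa_t,\epsilon_t>0$ entirely arbitrary throughout, as the lemma allows, requires that every manipulation be an identity or a one-sided inequality independent of any monotonicity assumption on the stepsizes.
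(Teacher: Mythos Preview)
Your proposal is correct and follows essentially the same route as the paper. The paper organizes the argument through the Lagrangian decomposition $Reg(T)=\sum_t[L_t(u_t,\mathbf 0)-L_t(u_t,\bar\lambda_t)]+\sum_t[L_t(u_t,\bar\lambda_t)-L_t(u_t^\star,\bar\lambda_t)]$ and then handles the two brackets by, respectively, the squared-norm expansion of the averaged dual update and projection non-expansiveness plus convexity---exactly your ``dual piece'' and ``gradient piece''; the consensus splittings $\tilde\lambda_{i,t}=\bar\lambda_t+(\tilde\lambda_{i,t}-\bar\lambda_t)$ and $\tilde y_t=\bar y_t+(\tilde y_t-\bar y_t)$ and the use of $\sum_i h_{i,t}(u_{i,t}^\star)=0$ are identical.
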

\begin{proof}
	The proof of Lemma 1 is provided in Appendix.A.
\end{proof}
\begin{lemma}
    Let learning rate $\kappa_t\in\mathbb R_{>0}$ and $T>1$. Denote $S(T)\coloneqq\sum_{t=1}^T(\Vert u_t-u_t^\star\Vert^2-\Vert u_{t+1}-u_t^\star\Vert^2)/(2\kappa_t)$. Denote the bound on decision variables as $B_u$, where $B_u = \max(\overline d_i,\overline c_i,\forall i\in1,..,N)$. For Phase 1, the following statement is true if and only if $\kappa_t$ decreases with $t$
    \begin{align}\label{eq39}
        S(T)\leq NB_u^2/\kappa_T + NB_u V(T).
    \end{align}
\end{lemma}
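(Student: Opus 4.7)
The plan is to convert $S(T)$ into a near-telescoping sum by shifting the index on the negative sum, and then absorb the remaining residue into two pieces: one bounded by the monotonic change of $1/\kappa_t$ and the boundedness of the feasible set, and one proportional to the path variation $\sum_t \|u_t^\star - u_{t-1}^\star\|$. The boundedness of $\|u_t\|, \|u_t^\star\|$ will follow from the constraints (17c)--(17d) together with the mutual exclusion of charge/discharge, which gives $\|u_t\|^2 \le N B_u^2$; in particular $\|u_t - u_t^\star\|^2 \le 4 N B_u^2$ throughout.

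First I would perform the substitution $s = t+1$ in the second term of $S(T)$, yielding
\begin{align*}
S(T) &= \tfrac{1}{2\kappa_1}\|u_1-u_1^\star\|^2 + \sum_{t=2}^T \tfrac{1}{2\kappa_t}\|u_t - u_t^\star\|^2 \\
&\quad - \sum_{t=2}^{T+1} \tfrac{1}{2\kappa_{t-1}}\|u_t - u_{t-1}^\star\|^2,
\end{align*}
after which I drop the nonpositive $t=T+1$ tail. Next, I insert and subtract $\tfrac{1}{2\kappa_{t-1}}\|u_t - u_t^\star\|^2$ inside the summand to split each bracket into a coefficient-mismatch piece $\bigl(\tfrac{1}{2\kappa_t}-\tfrac{1}{2\kappa_{t-1}}\bigr)\|u_t-u_t^\star\|^2$ and a reference-point-mismatch piece $\tfrac{1}{2\kappa_{t-1}}\bigl(\|u_t-u_t^\star\|^2-\|u_t-u_{t-1}^\star\|^2\bigr)$.

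The reference-point-mismatch piece I would bound via the identity $a^2-b^2 = (a+b)(a-b)$ applied to the vectors $u_t-u_t^\star$ and $u_t-u_{t-1}^\star$; the resulting Cauchy--Schwarz step produces a factor $\|u_{t-1}^\star - u_t^\star\|$ multiplied by $\|2u_t - u_t^\star - u_{t-1}^\star\|$, and uniform boundedness of the latter by $4\sqrt{N}B_u$ converts the sum into a multiple of $V(T) := \sum_{t\ge 2} \|u_t^\star - u_{t-1}^\star\|$. The coefficient-mismatch piece is handled by bounding $\|u_t-u_t^\star\|^2 \le 4NB_u^2$ and then telescoping, provided $1/\kappa_t - 1/\kappa_{t-1} \ge 0$, i.e.\ $\kappa_t$ is nonincreasing. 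Combining with the boundary term $\tfrac{1}{2\kappa_1}\|u_1-u_1^\star\|^2$ and collapsing the telescope to $1/\kappa_T - 1/\kappa_1$, constants align to yield the claimed $2NB_u^2/\kappa_T + 2NB_u V(T)$ after collecting common factors (the numerical constants drop out once the $\kappa_{t-1}$ in the path-variation term is absorbed).

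For the reverse (``only if'') direction, I would argue that the inequality $\|u_t - u_t^\star\|^2 \le 4NB_u^2$ used in the telescoping step is tight in the worst case (e.g.\ opposite extreme corners of the feasible set), so that a sign flip in $1/\kappa_t - 1/\kappa_{t-1}$ at some $t$ reverses the sense of the corresponding term in the upper bound. A short counterexample with two time steps, where $\kappa$ is allowed to increase, makes $S(T)$ exceed the claimed right-hand side and thereby violates the inequality, forcing the monotone decreasing condition on $\kappa_t$. The main obstacle I anticipate is matching constants precisely on the path-variation term, because the natural decomposition yields a $1/\kappa_{t-1}$-weighted sum, and showing that this collapses into the clean $2NB_u V(T)$ factor (free of $\kappa$) requires either an additional uniform bound on $1/\kappa_{t-1}$ or a careful re-grouping with the coefficient-mismatch piece; getting this re-grouping right, and simultaneously preserving the ``iff'' structure, is where the delicate accounting lives.
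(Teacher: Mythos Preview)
Your approach is essentially the same as the paper's. The paper adds and subtracts $\|u_{t+1}-u_{t+1}^\star\|^2$ inside the summand to split $S(T)=S_1(T)+S_2(T)$, where $S_1$ telescopes (using $\kappa_t$ nonincreasing and $\|u_t-u_t^\star\|^2\le 4NB_u^2$) and $S_2$ is handled via $\|x\|^2-\|y\|^2\le\|x+y\|\,\|x-y\|$; up to an index shift this is exactly your coefficient-mismatch / reference-point-mismatch decomposition.

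The obstacle you anticipate on the path-variation term is not real: in the paper, $V(T)$ is \emph{defined} (in Theorem~1) as the $\kappa$-weighted sum $V(T)\coloneqq\sum_{t=1}^T\|u_{t+1}^\star-u_t^\star\|/\kappa_t$, not the unweighted path length. With that definition, your reference-point-mismatch term $\sum_{t\ge 2}\tfrac{1}{2\kappa_{t-1}}\bigl(\|u_t-u_t^\star\|^2-\|u_t-u_{t-1}^\star\|^2\bigr)$ is bounded directly by a constant times $V(T)$ after the $a^2-b^2$ factorization, with no re-grouping or extra uniform bound on $1/\kappa_{t-1}$ needed.

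One remark: the paper's proof in fact establishes only the ``if'' direction (sufficiency of $\kappa_t$ nonincreasing); it does not supply the counterexample argument for ``only if'' that you sketch. So on that point you are already going beyond what the paper does.
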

\begin{proof}
	The proof of Lemma 2 is provided in Appendix.B.
\end{proof}

All these suggest that the boundedness of $Reg_1(T)$ relies on a sequence of results and the selection of learning rates. Note that the instantaneous dynamic regret $\sum_{i=1}^{N}f_{i,t}(u_{i,t})-\sum_{i=1}^{N}f_{i,t}(u_{i,t}^\star)$ may not perfectly converge to the exact value of zero. However, the algorithm provides a near-optimal allocation and meets the constraints in most circumstances. The following assumptions are required to facilitate the derivation of our main results.
\begin{assumption}
(1) The local cost functions $f_{i,t}:\mathbb R^2\rightarrow\mathbb R$ are Lipschitz continuous and there exists a positive constant $C_f$ such that $\Vert\partial f_{i,t}(x)\Vert\leq C_f$ for $\forall i\in1,...,N$ and $\forall t\in0,...,T-1$; (2) The time-varying disturbances the interconnected power system is subject to is norm-bounded.
\end{assumption}
\begin{remark}
This remark gives some important results for deriving the convergence analysis. Under Assumption 1.2, there exists a constant $B_y>0$ such that $\Vert y_{i,t}\Vert$ and $\Vert\tilde y_{i,t}\Vert$ are both uniformly bounded by $B_y$. When digging into the updating law (\ref{algorithm_lambda}), we have $\Vert\lambda_{i,t+1}\Vert=\Vert(1-\eta_t)\tilde\lambda_{i,t} + \gamma\tilde y_{i,t}\Vert\leq(1-\eta_t)\Vert\tilde\lambda_{i,t}\Vert + \gamma B_y$. According to (\ref{eq28}) and $\sum_{j=1}^N w_{ij}=1$, one might expect $\Vert\tilde\lambda_{i,t+1}\Vert= \Vert\sum\nolimits_{j=1}^N w_{ij}\lambda_{j,t+1}\Vert\leq\max(\Vert\lambda_{i,t+1}\Vert,\forall i\in1,...,N)$. It can be easily verified by mathematical induction that $\Vert\lambda_{i,t}\Vert,\ \Vert\tilde\lambda_{i,t}\Vert,\ \Vert\bar\lambda_{i,t}\Vert\leq \gamma B_y\kappa_t/\eta_t$. Further we have $\Vert s_{i,t}\Vert\leq\Vert\partial f_{i,t}(u_{i,t})\Vert + \Vert\textbf 1_2\tilde\lambda_{i,t}\Vert\leq C_f + 2\gamma B_y\kappa_t/\eta_t$.
\end{remark}

\begin{theorem}
    Let $V(T)\coloneqq \sum_{t=1}^T\Vert u_{t+1}^{\star}-u_{t}^{\star}\Vert/\kappa_t$ and $0<\alpha<\beta<1$. Under Assumption 1, it always holds that 
    \begin{align}
        Reg_1(T)\in\mathcal O_{+}(T^{1+2\beta-3\alpha}) + \mathcal O_{+}(V_T).
    \end{align}
    For the case that $\mathcal O_{+}(V_T)<\mathcal O_{+}(T^{1+2\beta-3\alpha})$, we have also
    \begin{align}
        Fit_1(T) \in \mathcal O(T^{1-\frac{2\alpha-\beta}{2}})+ \mathcal O(T^{1-\frac{\beta-\alpha}{2}}).
    \end{align}
\end{theorem}

\begin{proof}
Below, we are in a position to ensure the boundedness of each term of (\ref{eq36}) by first identifying their asymptotic growth rates against $T$. Lemma 1 together with Assumption 1 lead to $\lim_{T\rightarrow\infty}{S(T)/T} = 0$. Now, the second term of (\ref{eq36}) can be obtained as
\begin{align}\label{eq40}
    &\sum_{t=1}^T\frac{1}{2\gamma\kappa_t}(\Vert\bar\lambda_t\Vert^2-\Vert\bar\lambda_{t+1}\Vert^2)\notag\\
    &\leq\frac{N}{2\gamma}\left[\sum_{t=2}^T(\frac{1}{\kappa_t}-\frac{1}{\kappa_{t-1}})\Vert\bar\lambda_{i,t}\Vert^2+\frac{1}{\kappa_1}\Vert\bar\lambda_{i,1}\Vert^2\right]\notag\\
    &<\frac{N}{2\gamma}\left[\sum_{t=2}^T(\frac{1}{\kappa_t}-\frac{1}{\kappa_{t-1}})+\frac{1}{\kappa_1}\right]\cdot(\frac{\gamma B_y\kappa_T}{\eta_T})^2\notag\\
    &=\frac{N\gamma B_y^2\kappa_T}{2\eta_T^2}\in\mathcal O_{+}(T^{2\beta-\alpha}).
\end{align}

By substituting $C_f+2\gamma B_y\kappa_t/\eta_t$ for $\Vert s_{i,t}\Vert$ according to the results of Remark 4, the third term of (\ref{eq36}) becomes
\begin{align}\label{eq42}
    &\sum_{t=1}^T\frac{\kappa_t}{2}\Vert s_t\Vert^2\leq\sum_{t=1}^T\frac{N}{2}\kappa_t(C_f + 2\gamma B_y\kappa_t/\eta_t)^2\notag\\
    &\leq\sum_{t=1}^T\frac{N\kappa_0}{2\eta_0^2}\left[C_f^2 t^{-\alpha} + 4\gamma C_fB_y t^{\beta-2\alpha} + 4\gamma^2B_y^2 t^{2\beta-3\alpha}\right]\notag\\
    &<\frac{N\gamma C_f^2\kappa_0}{2\eta_0^2}\int_{1}^{T} t^{-\alpha} dt + 2N\gamma C_fB_y\int_{1}^{T} t^{\beta-2\alpha} dt\notag\\
    &\quad +2N\gamma B_y^2\int_{1}^{T} t^{2\beta-3\alpha} dt + \text{const.}\notag\\
    &\in O_{+}(T^{1+2\beta-3\alpha}).
\end{align}

Similarly, the fourth term of (\ref{eq36}) can be transferred into
\begin{align}\label{eq41}
	&\sum_{t=1}^T\frac{1}{2\gamma\kappa_t}\Vert\gamma\kappa_t\tilde y_t-\eta_t\bar\lambda_t\Vert^2\notag\\
	&\leq\sum_{t=1}^T\frac{1}{2\gamma\kappa_t}\left(\gamma^2 B_y^2\Vert\kappa_t\Vert^2 + \Vert\eta_t\bar\lambda_t\Vert^2+2\gamma B_y\Vert\kappa_t\eta_t\bar\lambda_t\Vert\right)\notag\\
	&\leq2\gamma B_y^2\sum_{t=1}^T\kappa_t\notag\\
	&\in\mathcal O_{+}(T^{1-\alpha}).
\end{align}

Furthermore, let us apply $\sum_{t=1}^T\Vert\bar y_t-\tilde y_t\Vert\in  \mathcal O_{+}(T^{1+\beta-2\alpha})$ and $\sum_{t=1}^T\Vert\bar\lambda_t-\tilde\lambda_t\Vert\in\mathcal O_{+}(T^{1-\alpha})$ \cite{9184135}. Hence omitting the less significant terms associated with $T$ allows us to conclude
\begin{align}\label{eq43}
    &\sum_{t=1}^T\Vert\bar\lambda_t\Vert\cdot\Vert\bar y_t-\tilde y_t\Vert +\sum_{t=1}^T2\Vert u_t\Vert\cdot\Vert\bar\lambda_t-\tilde\lambda_t\Vert\notag\\
    &\leq\frac{N\gamma B_y\kappa_T}{\eta_T}\sum_{t=1}^T\Vert\bar y_t-\tilde y_t\Vert +2NB_u\sum_{t=1}^T\Vert\bar\lambda_t-\tilde\lambda_t\Vert\notag\\
    &\in\frac{N\gamma B_y\kappa_T}{\eta_T}\cdot\mathcal O_{+}(T^{1+\beta-2\alpha}) + 2NB_u\cdot\mathcal O_{+}(T^{1-\alpha})\notag\\
    &\in\mathcal O_{+}(T^{1+2\beta-3\alpha}).
\end{align}
Thus, $Reg_1(T) = \mathcal O_+(T^{2\beta-\alpha}) + \mathcal O_+(T^{1+2\beta-3\alpha}) + \mathcal O_+(T^{1-\alpha}) + \mathcal O_+(V_T) \in \mathcal O_+(T^{1+2\beta-3\alpha})$, which implies that sublinear dynamic regret can be achieved for $2\beta<3\alpha$. Relaxation of the results in Lemma 1 gives us
\begin{align}
     Reg_1(T) + \frac{Fit_1(T)^2}{2N^2\sum_{t=1}^T\eta_t/\kappa_t}\in\mathcal O_+(T^{1+2\beta-3\alpha}),
\end{align}
of which the second term can obtained by reserving $\lambda$ in the derivation of (\ref{eq45}). Based on Assumption 1.2, we can now obtain that
\begin{align}
     Fit_1(T)^2 &\leq2N^2\sum_{t=1}^T\frac{\eta_t}{\kappa_t}\left[\mathcal O_+(T^{1+2\beta-3\alpha}) - Reg_1(T)\right]\notag\\
     &\leq\frac{2N^2\eta_0}{\kappa_0}\sum_{t=1}^T t^{\alpha-\beta}\left[\mathcal O_+(T^{1+2\beta-3\alpha}) + 2NC_fB_uT\right]\notag\\
     &\leq \mathcal O_+(T^{2+\beta-2\alpha}) + \mathcal O_+(T^{2+\alpha-\beta}),
\end{align}
and then
\begin{align}\label{eq50}
     Fit_1(T) \in \mathcal O(T^{1-\frac{2\alpha-\beta}{2}})+ \mathcal O(T^{1-\frac{\beta-\alpha}{2}}),
\end{align}
The proof is complete.
\end{proof}

\begin{remark}
    Based on the results of Theorem 1, we know that a sublinear dynamic regret can be achieved for $0<1+2\beta-3\alpha<1$, while a sublinear dynamic fit can be achieved for $\alpha<\beta<2\alpha$. It is able to ensure both sublinear dynamic regret and fit by selecting $\alpha$ and $\beta$ according to $\alpha<\beta<\frac{3}{2}\alpha$. Further, we show that dynamic regret and fit for a complete optimization stage have sublinear guarantees.
From the definitions we have
\begin{align}
        Reg(T+T') &= Reg_1(T) + Reg_2(T'),\label{eq51}\\
        Fit(T+T') &= Fit_1(T) + Fit_2(T').\label{eq52}
\end{align}
For Phase 2, it has been established that $Reg_2(T')\in\mathcal O_{+}(T'^{1-\alpha})$ and $Fit_2(T')\in\mathcal O(T')$ (Theorem 8 \cite{mahdavi2012trading}). Invoking the criterion that $\alpha<\beta<\frac{3}{2}\alpha$ we can reduce (50) into $Fit_1(T) \in \mathcal O(T^{1-\frac{\beta-\alpha}{2}})$, which implies
\begin{align}
        Reg(T+T') &\in \mathcal O_{+}(T^{1+2\beta-3\alpha}) + \mathcal O_+(T'^{1-\alpha}),\\
        Fit(T+T') &\in \mathcal O(T^{1-\frac{\beta-\alpha}{2}}) + \mathcal O(T').
\end{align}
Their convergence rates with respect to $T+T'$ depend on both the number of $T$ and $T'$ and the selection of $\alpha$ and $\beta$. However, it always true that
\begin{align}
        Reg(T+T') &<\mathcal O_+((T+T')^{1-(3\alpha-2\beta)}),\\
        Fit(T+T') &< \mathcal O(T+T').
\end{align}
The proposed algorithm is subject to both sublinear dynamic regret and fit. It is clear from (55) and (56) that dynamic fit is gained by sacrificing the dynamic regret to an extent, which aligns with with the idea of adaptive rates originated in \cite{jenatton2016adaptive} and offers flexibility in its practical implementation in power systems by taking carefully the trade-off between dynamic regret and fit.
\end{remark}

\section{Simulation Study}
\subsection{Simulation Setup}
Simulations are conducted in MATLAB/Simulink environment on an IEEE-14 bus system and an IEEE-39 bus system modified for AGC studies. The power system dynamics are linearized around its nominal operating point and captured by a low-order system model, as depicted in Fig. 1. Each bus is assigned with a bus agent that manages information exchange and all computations including online learning and online optimization. The communication topology is defined as an undirected graph that can be rather flexible but should at least contain a path between any two agents. We choose $\alpha=0.3$ and $\beta=0.4$. The control interval governing the updates of ORRA is set to 0.1 seconds, to which the communication delay is negligible in relative. Each BESS has a capacity of 2 MWh, peak power of 1 MW, and charging/discharging efficiencies for 0.95. Their initial SoCs are arbitrarily chosen from [0.2, 0.8].

The initial evaluation point for the market interval, denoted by $p_0$, is assumed to be 0 for the following case studies. We use a decaying coefficient to determine the next evaluation points for the distance infill method. During real-time operation, data are gathered online if certain conditions are met. To simulate noise or random events, we corrupt the data sample $P_i^{\rm dr}$ by simply adding a small-amplitude Gaussian noise. Numerous techniques have been proposed to address non-Gaussian noise in stochastic systems. Of note are B-spline RBF networks, which employ a B-spline model to approximate the output probability density function. It has been shown in \cite{yin2019rbfnn} that nonlinear filters based on B-spline models are capable of smoothing out non-Gaussian noise. Other possible approaches include probability density function transformation \cite{zhang2021novel}. This technique involves utilizing kernel density estimation that projects sample data to a high-dimensional space, thereby providing enhanced comprehension of the behavior of non-Gaussian stochastic systems.

\subsection{Effectiveness Verification}
This case study is provided as a calibration to examine the efficacy of AIE and the main features of ORRA. Simulation is carried out on the IEEE 14-bus system, where we single out Area 1 as the research object and consider BESS participation. Five BESSs are installed across Area 1 and their communication topology is given in Fig. 6.
\begin{figure}[!t]
	\centering	\subfigure{\includegraphics[width=0.53\columnwidth]{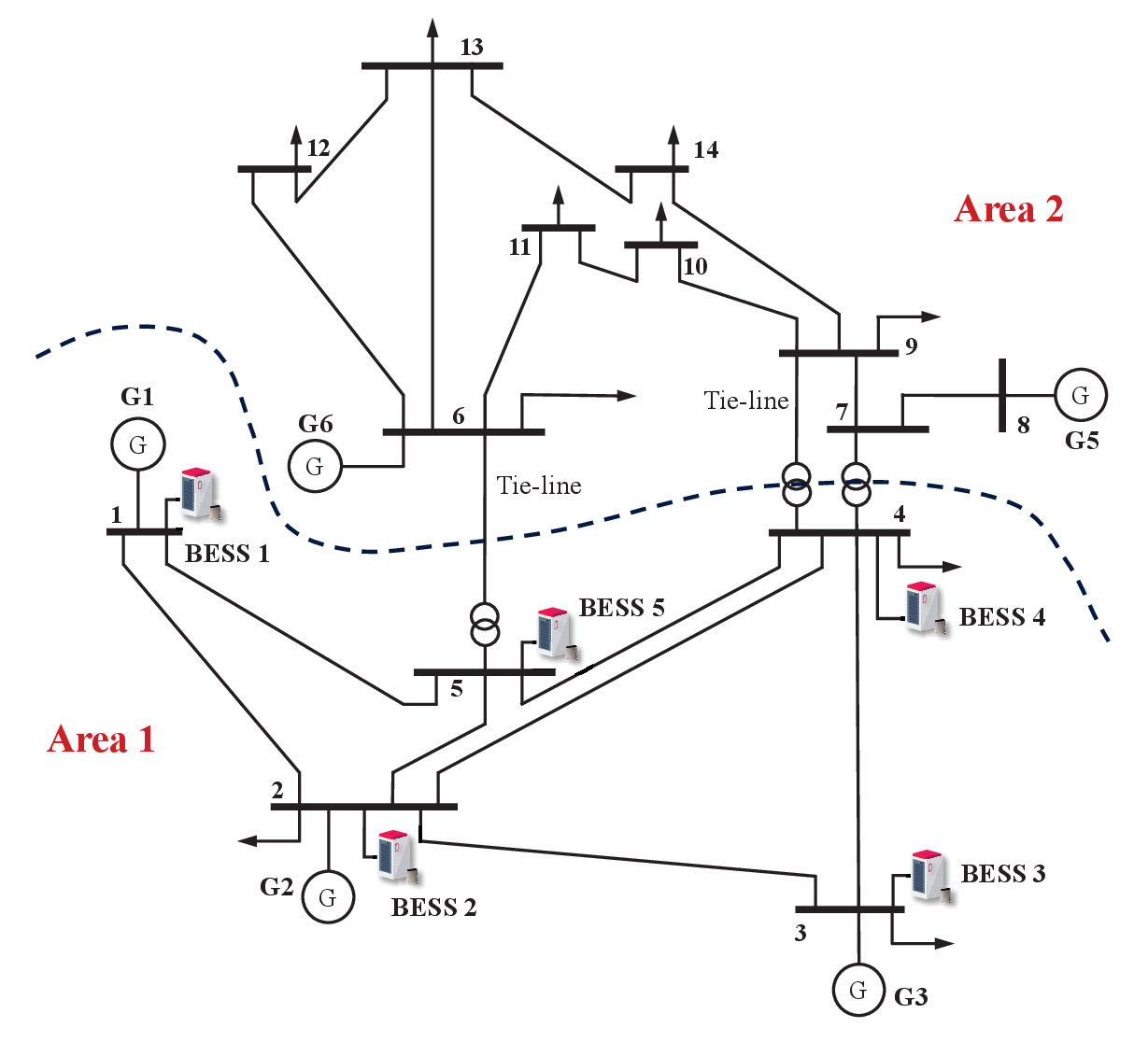}}
	\raisebox{0.45\height}{
	\subfigure{\includegraphics[width=0.42\columnwidth]{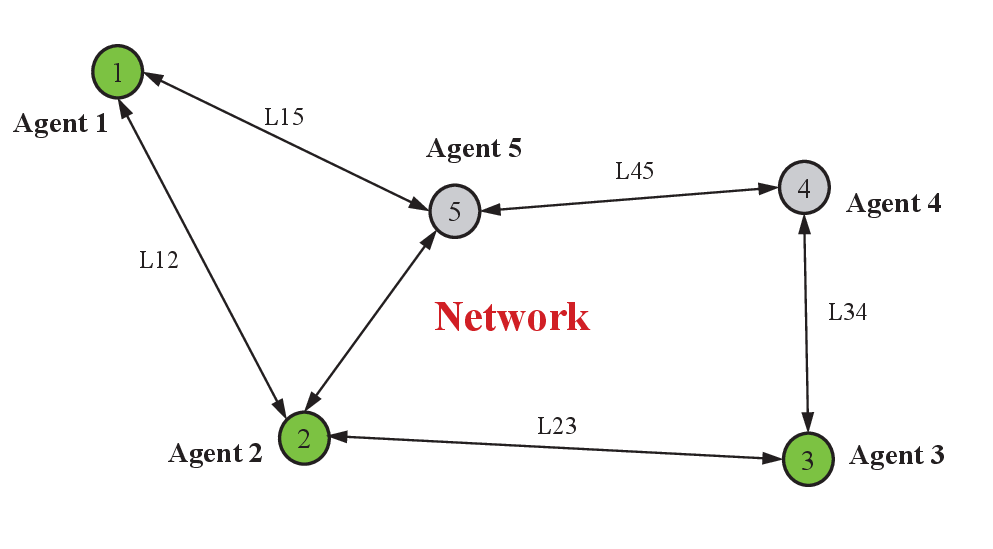}}
	}
	\caption{Single-line diagram of the two-area modified IEEE 14-bus system and communication topology of five BESSs in Area 1.}
\end{figure}

\subsubsection{Online Learning}
The learning results for a bus agent within an intraday market interval are shown in Fig. 7. The green line represents the actual $P$-$f$ characteristics, which is a high-level aggregation showing nonlinear frequency dependency, and the blue line represents the characteristics evaluated by the online interpolated RBF network. The online learning is a dynamic and adaptive process, as it involves collecting data in real-time and performing adjustment on the RBF network's weights and parameters.  According to the infill strategy, the new evaluation points $p_1$, $p_2$, $\dots$, $p_{10}$ are obtained in sequence over a relatively long time span. To highlight the model improvement, we split the learning process into two stages, represented in Fig. 7(a) and Fig. 7(b), respectively. During each stage, we obtain a sequence of evaluation points, including $p_1$, $p_2$, $\dots$, $p_{4}$ and $p_{5}$, $p_6$, $\dots$, $p_{10}$. Through a few evaluations that strike a balance between exploration and exploitation, it is evident that the interpolant of RBF network can well emulate the aggregated $P$-$f$ characteristics.
\begin{figure}[!b]
	\centering
	\subfigure[Stage 1.]{\includegraphics[width=0.45\columnwidth]{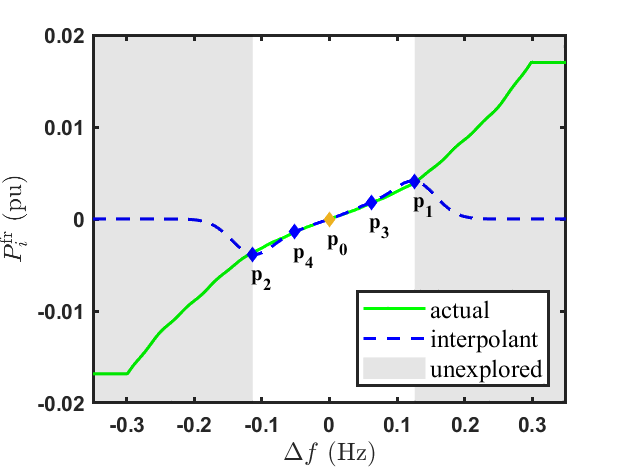}}
	\subfigure[Stage 2.]{\includegraphics[width=0.45\columnwidth]{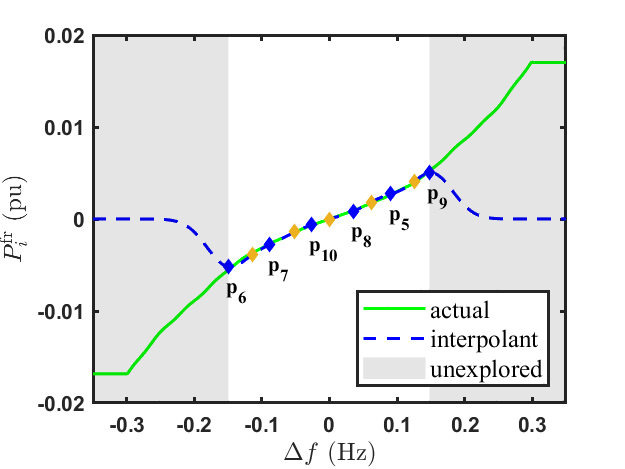}}
    \caption{Results of the online interpolated RBF network for bus agent 1.}
\end{figure}

\subsubsection{AGC Enhancement}
Fig. 8 shows the simulation results with respect to a step load increase of 5 MW at $t=10$s, which provides a visual assessment of the AGC enhancement. To avoid confusion, we provide an outline that compares four different configurations to highlight the significance of the AIE and BESS participation:
\begin{itemize}
    \item $\widehat{AIE}$+BESS: $\widehat{AIE}$-based AGC with BESS participation;
    \item $\widehat{AIE}$: $AIE$-based AGC without BESS participation
    \item $AIE$: $AIE$-based AGC without BESS participation;
    \item $ACE$: $ACE$-based AGC without BESS participation.
\end{itemize}
\begin{figure}[!t]
    \centering
	\subfigure[Area 1 frequency.]{\includegraphics[width=0.45\columnwidth]{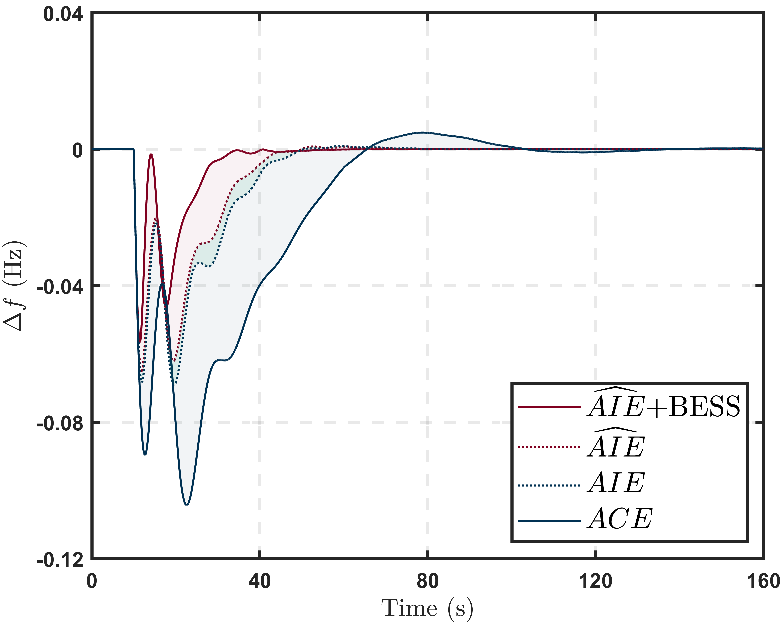}}
	\subfigure[Area 2 frequency.]{\includegraphics[width=0.45\columnwidth]{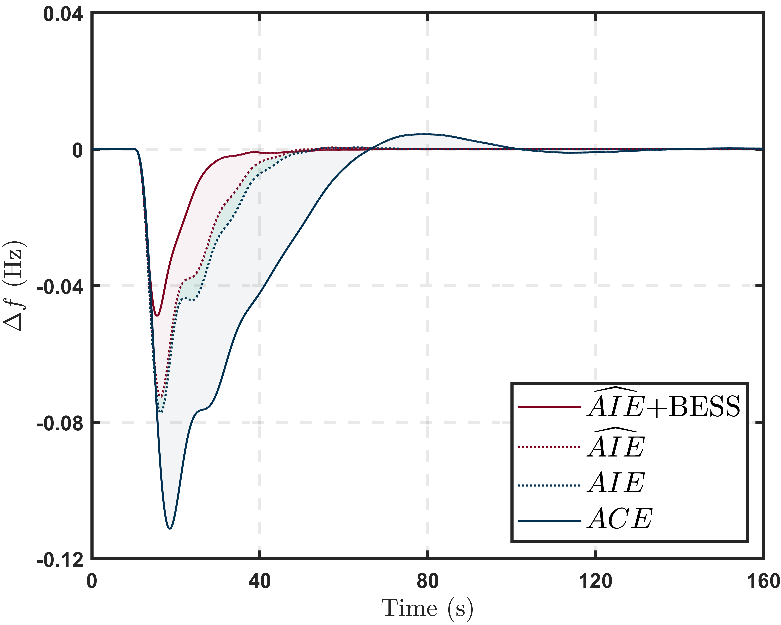}}
	\subfigure[BESS power.]{\includegraphics[width=0.45\columnwidth]{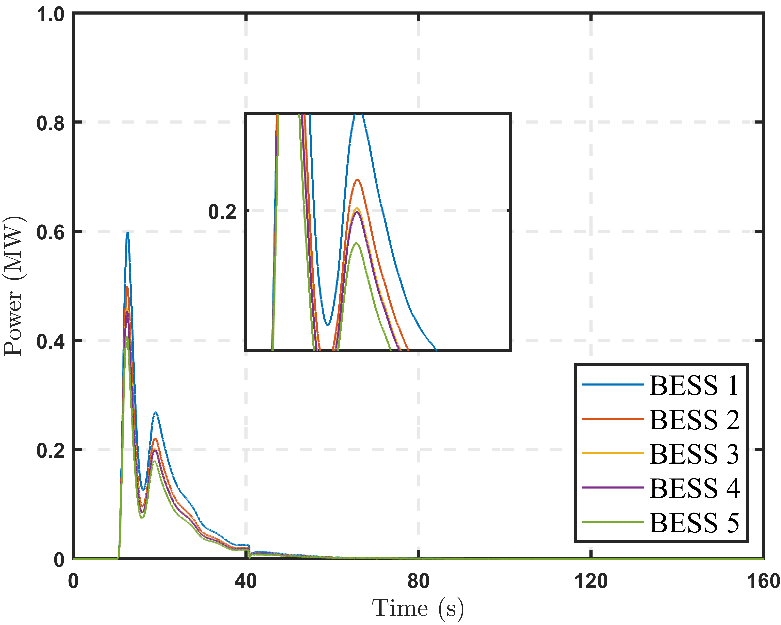}}
	\subfigure[BESS marginal cost.]{\includegraphics[width=0.45\columnwidth]{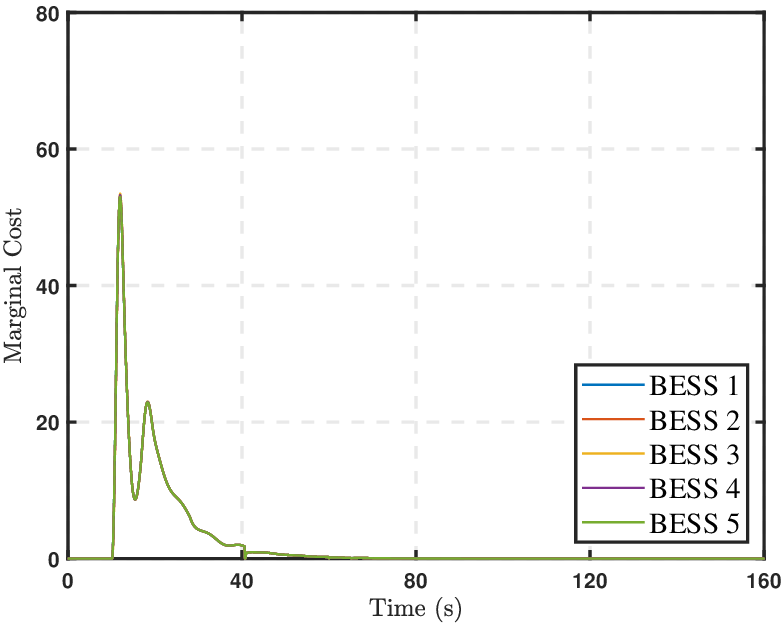}}
	\subfigure[Dynamic fit (iteration).]
	{\includegraphics[width=0.45\columnwidth]{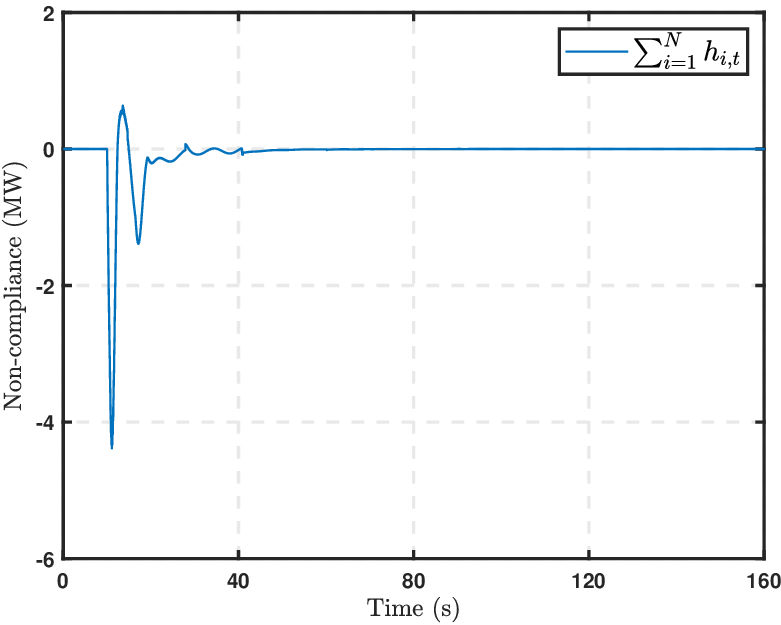}}
	\subfigure[Dynamic regret (iteration).]
	{\includegraphics[width=0.45\columnwidth]{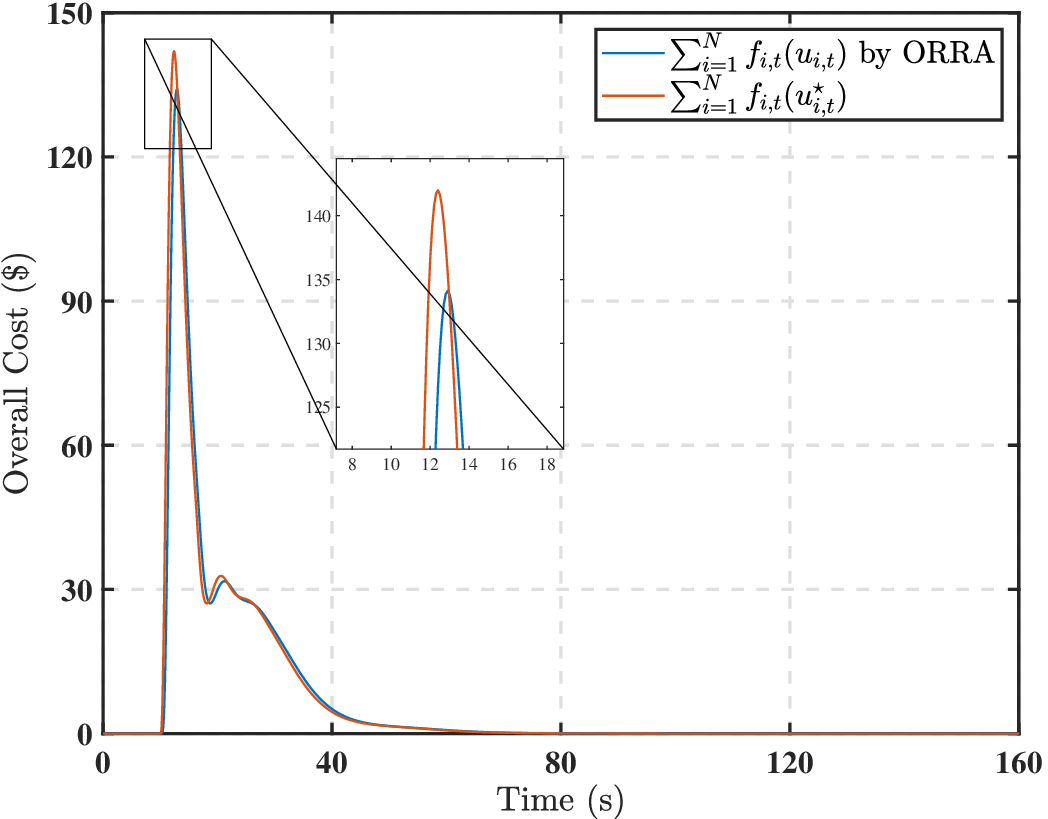}}
	\caption{Responses with respect to a step load increase of 5 MW in Area 1.}
\end{figure}
The frequency responses of the two areas are shown In Figs. 8(a)--(b). The AGC enhancement by the use of AIE and BESS participation are respectively marked in red and blue, whereas the green area shows the effects of incorporating the RBF network in the AIE. Benefiting from the capabilities of responding fast and precisely, the BESSs have reshaped the turbine-governor response and led to a significant enhancement in AGC performance, which is quantified by comparing the frequency drops of ``$\widehat{AIE}$+BESS" and ``$\widehat{AIE}$". As discussed in Section \uppercase\expandafter{\romannumeral3}. A, the ACE presumes a linear governor droop, ignoring the bias uncertainty raised by the slow and nonlinear dynamics of turbine-governor systems. Instead, the AIE enables a dynamic frequency bias by measuring the instantaneous difference between the governor input and the mechanical power output, thus mitigating the regulation inefficiencies. As shown in Figs. 8(c)--(d), BESSs respond to load transients at different levels due to cost heterogeneity and gradually detach from AGC in pace with the minimization of AIE.  Fig. 8(e) depicts the dynamic fit per iteration ($\sum_{i=1}^Nh_{i,t}(u_{i,t+1})$) which measures non-compliance with the ramping needs and is swiftly restored to a low level. In Fig. 8(f), the overall cost scaled to an hour is also compared with a centralized counterpart having full access to global information. The slight inconsistency between $\sum_{i=1}^Nf_{i,t}(u_{i,t})$ and $\sum_{i=1}^Nf_{i,t}(u_{i,t}^\star)$ reflects the dynamic regret per iteration.

\subsubsection{Long-Term Operation}
Following is an examination for integrity of grid ramp support in long-term operation. we introduce successive step changes as represented by the green line in Fig. 9(a), with positive values indicating power deficiency and negative values vice-versus. As a result, the BESSs shift between discharging mode and charging mode to counteract the AIE. The benefit of ORRA is characterized by a high degree of synergy, which is evident from the almost full complement provided by these two classes of regulation resources. This complementarity is particularly pronounced when the ramping capabilities of the CGs are insufficient during transients. Fig. 9(b) shows the evolution of the SoC levels over a time span of more than 30 minutes. The SoC levels remain closely around their initial values, avoiding continuous charging/discharging, which is a desirable property for maintaining the long-term operational integrity of BESSs.

\begin{figure}[!t]
	\centering
    \subfigure[CG power, BESS power, net-load variation (total).]{\includegraphics[width=0.85\columnwidth]{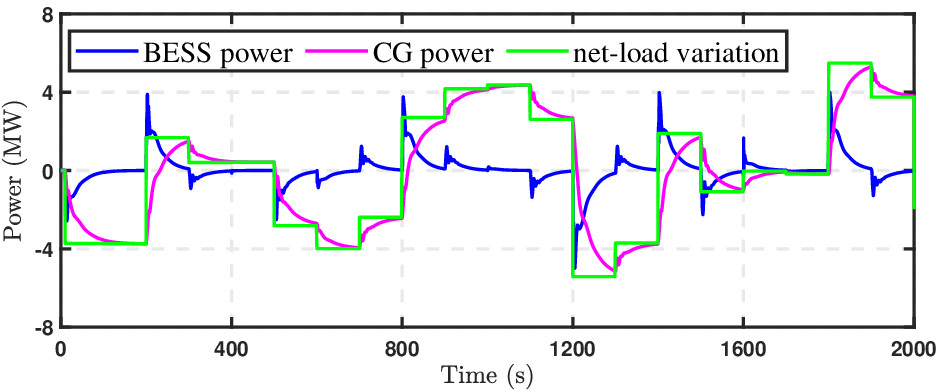}}
	\subfigure[Evolution of SoC levels across 30 minutes.]{\includegraphics[width=0.85\columnwidth]{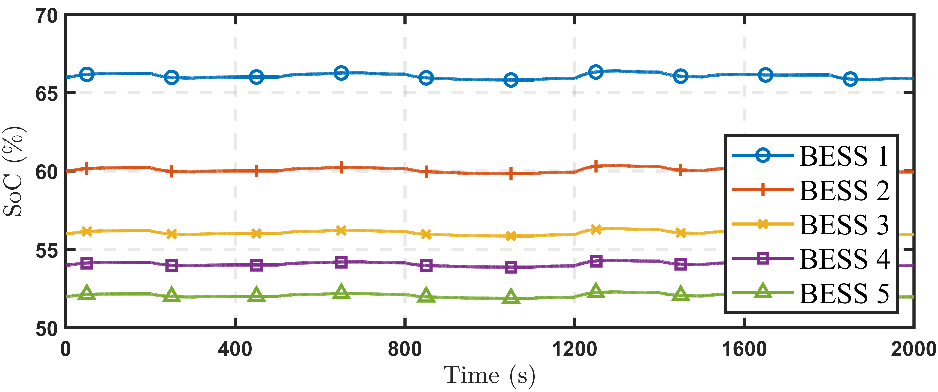}}
	\caption{Grid ramp support in long-term operation.}
\end{figure}

\subsection{Scalability Test}
This case study involves a scalability test on a two-area IEEE 39-bus system. The network topology is schematically illustrated in Fig. 10. It is worth mentioning that both areas are subjected to ORRA, which is also a necessary supplement to the previous case study to more rigorously verify its efficacy. We will show how ORRA performs in two areas of different sizes, subject to various load transients. Also, we pay particular attention to the impact of network topology. To this end, Area 1 adopts a linear topology, which has the minimum communication links among radial topologies. Conversely, we employ a mesh topology that is common for networked systems to Area 2. Two areas abide by the non-interaction principle of AGC.

\begin{figure}[htbp]
    \centering
    \includegraphics[width=0.55\columnwidth]{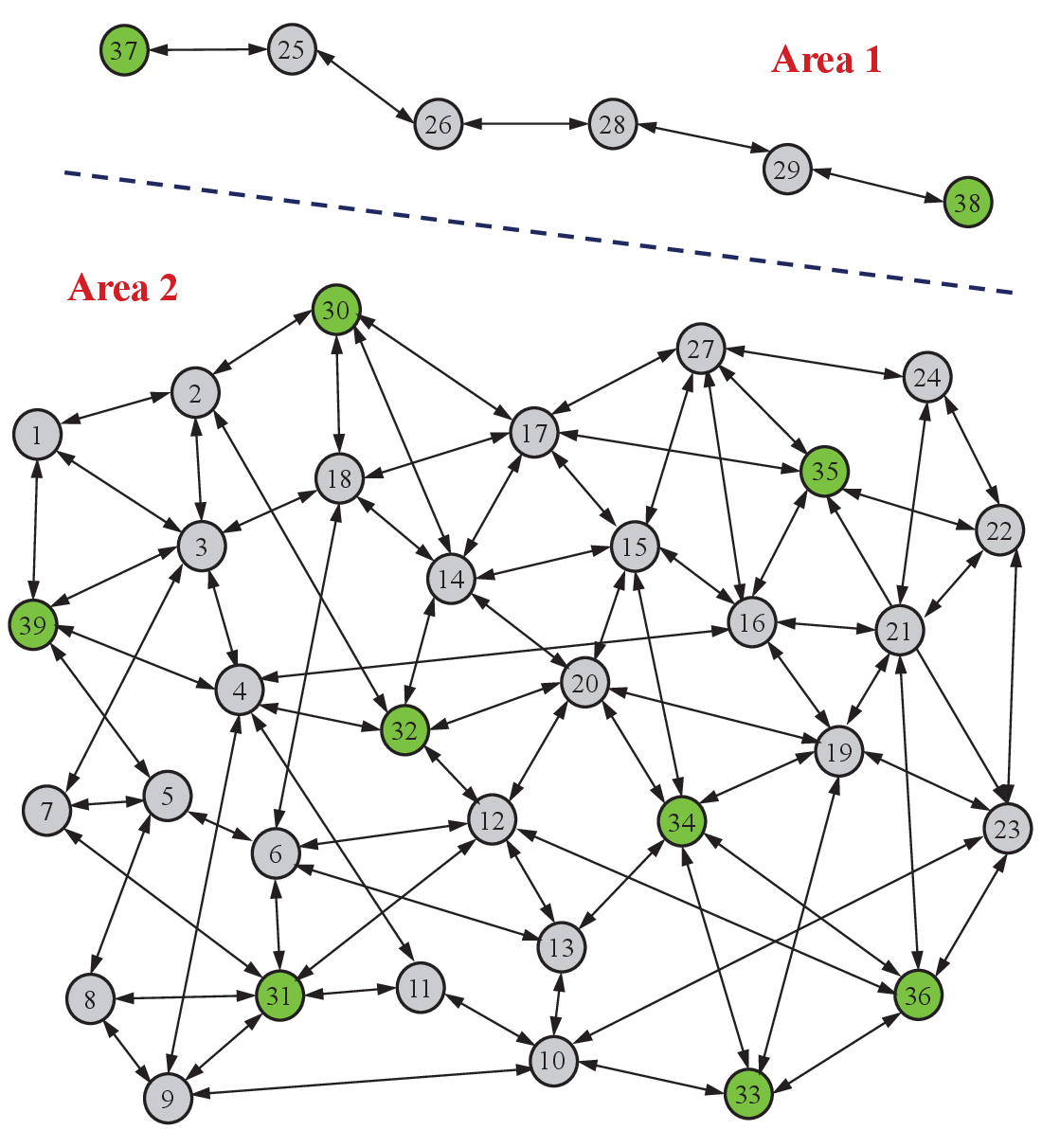}
	\caption{Communication network topology for the IEEE 39-bus system.}
\end{figure}

Fig. 11 presents the results of ORRA evaluated under different control intervals and compared with a Resource Allocation (RA) algorithm modified for ramping reserve allocation, referred to as RA here. Additionally, AIE-based AGC without BESS participation is provided as a benchmark. Comparing Figs. 11(b) and 11(d) shows that RA is more sensitive to the selection of control intervals and exhibits significant oscillations. On the other hand, ORRA, with its adaptive learning rate design and treatment on dynamic fit, offers superior frequency stabilization, as reflected in the reduced magnitudes and fewer oscillations of frequency deviation.
\begin{figure}[!b]
    \centering
    \subfigure[Area 1, $\tau=0.1$.]{\includegraphics[width=0.45\columnwidth]{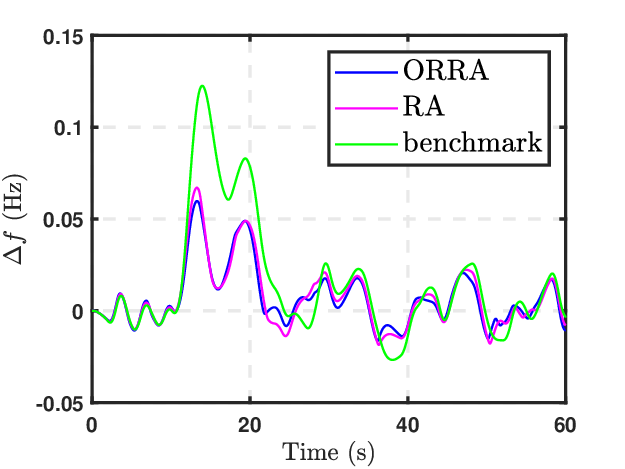}}
    \subfigure[Area 2, $\tau=0.1$.]{\includegraphics[width=0.45\columnwidth]{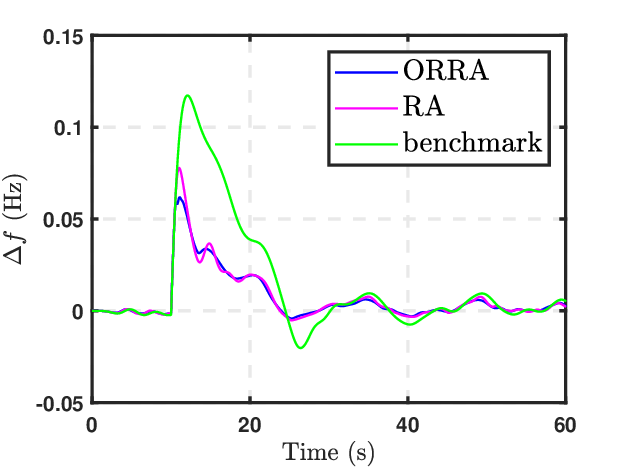}}
    \subfigure[Area 1, $\tau=0.2$.]{\includegraphics[width=0.45\columnwidth]{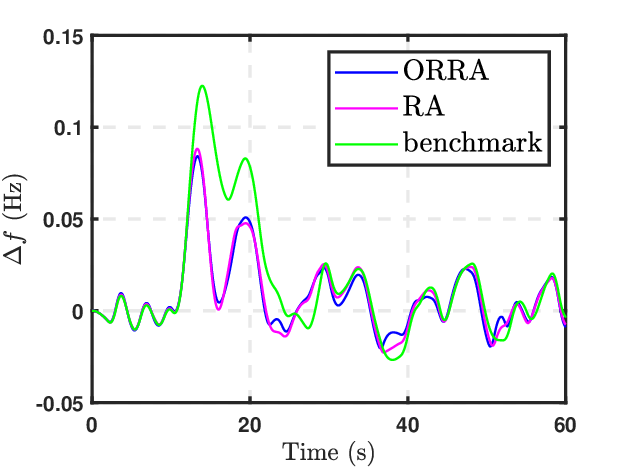}}
    \subfigure[Area 2, $\tau=0.2$.]{\includegraphics[width=0.45\columnwidth]{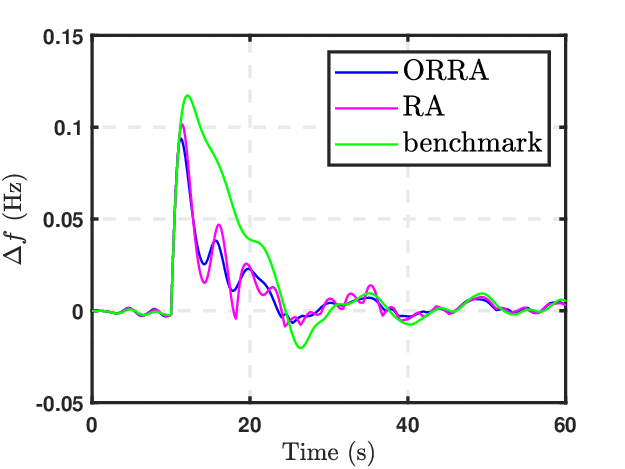}}
    \caption{Frequency responses under different control intervals. A 20 MW load decrease with small variations is introduced to Area 2.}
\end{figure}
\begin{figure}[tbp]
	\centering
	\subfigure[BESS Power, Area 1.]{\includegraphics[width=0.45\columnwidth]{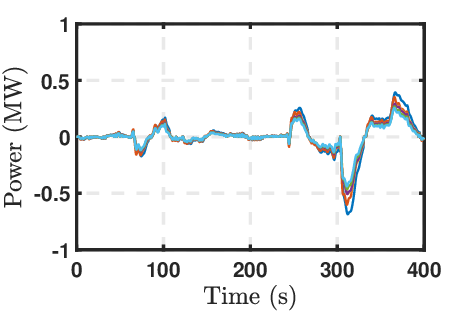}}
        \subfigure[BESS Power, Area 2.]{\includegraphics[width=0.45\columnwidth]{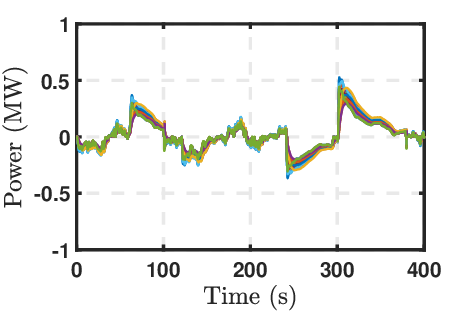}}
        \subfigure[Consensus of $\lambda$, Area 1.]{\includegraphics[width=0.45\columnwidth]{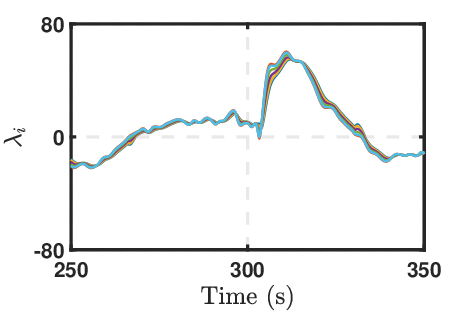}}
        \subfigure[Consensus of $\lambda$, Area 2.]{\includegraphics[width=0.45\columnwidth]{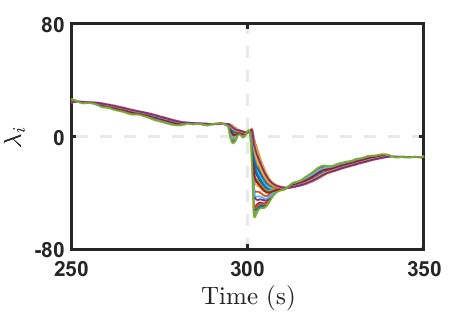}}
        \subfigure[Dynamic fit (iteration) under net-load variations, Area 2.]{\includegraphics[width=0.98\columnwidth]{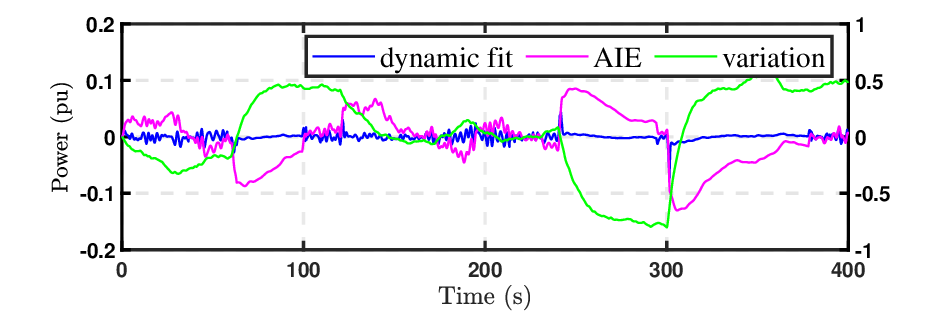}}
	\caption{Key results of the scalability test, where severe net-load variations bounded by [-1,1] pu are introduced to Area 2.}
\end{figure}

It can be observed from Figs. 12(a)-(b), the BESSs in the two different areas exhibit different directions of response with respect to net-load variations in Area 2. This is attributed to tie-line power flows,  which are also part of the AIE signal for each area and thereby cause inter-area oscillations that the BESSs are responsible for mitigating. While the ORRA algorithm can theoretically converge with any topology, provided the existence of a direct spanning tree between any two nodes, achieving the desired control performance may be more challenging with certain topologies. A linear topology can be acceptable for a small network, but for larger networks, merely ensuring a direct spanning tree is inadequate as the consensus of $\lambda$ is susceptible to network connectivity. Appropriate redundancy of the communication network is crucial not only for achieving the desired control performance but also for providing robustness in the event of communication failures. The results depicted in Figs. 12(c)-(d) highlight the crucial role of network connectivity for the effective application of ORRA. To enhance network connectivity, additional communication links could be considered. By recalling that the dynamic fit corresponds to the level of non-compliance, we can use the ratio between the dynamic fit and the AIE to quantify the effectiveness of online optimization. As shown in Fig. 12(e), the dynamic fit of each iteration is kept at a relatively low level compared to the AIE, whereas the green line represents the net-load variations and corresponds to the right axis. For coordinating 6 and 33 BESSs, the average computation time for each iteration on a laptop with 8 Intel Core i5 processors running at 2.4 GHz is 16.2 and 21.7 ms, respectively. Therefore, the proposed scheme scales very well in terms of computational time each node requires only a low-cost computing unit.

\section{Conclusion}
This paper introduces a new scheme, called ORRA, which aims to coordinate multiple BESSs in AGC using a distributed and online approach. Incorporating an online learning paradigm, we have proposed a variant of the AIE that can enhance the performance of AGC even in the absence of BESSs. Next, we utilize the AIE to develop a distributed OCO algorithm with adaptive learning rates and a two-phase switch mechanism to make ORRA practically implementable. The proposed scheme has been shown to be able to improve the transient behavior of the AGC system in an unknown and variable environment while maintaining acceptable dynamic fit. Moreover, the scheme leverages the synergy between BESSs and conventional generators such that BESSs operate during transients only, achieving nearly energy-neutral operation. Future research directions include extending the algorithm to account for communication delays and mixed-integer programming problems.

\section{Appendix}
\subsection{Proof of Lemma 1}
\begin{proof}
    According to (\ref{lagrangian}) and $\sum_{i=1}^N h_{i,t}(u_{i,t}^\star) = 0$, we have that $Reg_1(T)\equiv\sum_{t=1}^T L_t(u_t,\textbf 0_N)-\sum_{t=1}^T L_t(u_t^\star,\bar\lambda_t)$, which allows us to rewrite the dynamic regret as
    \begin{align}\label{eq45}
    \begin{split}
        Reg_1(T) &= \sum_{t=1}^T\left[L_t(u_t,\textbf 0_N)-L_t(u_t,\bar\lambda_t)\right]\\
        &\quad +\sum_{t=1}^T\left[L_t(u_t,\bar\lambda_t)-L_t(u_t^\star,\bar\lambda_t)\right].
    \end{split}
    \end{align}
    
    To move forward, we need to obtain the upper bounds of $\sum_{t=1}^T[L_t(u_t,\textbf 0_N)-L_t(u_t,\bar\lambda_t)]$ and $\sum_{t=1}^T[L_t(u_t,\bar\lambda_t)-L_t(u_t^\star,\bar\lambda_t)]$. From updating law (\ref{algorithm_lambda}), we have
    \begin{align}\label{eq46}
        \Vert\bar\lambda_{t+1}\Vert^2 &= \Vert\bar\lambda_{t} + (\gamma\kappa_t\tilde y_t-\eta_t\bar\lambda_t)\Vert^2\notag\\
        &\leq\Vert\bar\lambda_{t}\Vert^2 + \Vert\gamma\kappa_t\tilde y_t-\eta_t\bar\lambda_t\Vert^2 + 2(\gamma\kappa_t\tilde y_t-\eta_t\bar\lambda_t)^\top\bar\lambda_t\notag\\
        &\leq\Vert\bar\lambda_{t}\Vert^2 + \Vert\gamma\kappa_t\tilde y_t-\eta_t\bar\lambda_t\Vert^2 + 2\gamma\kappa_t\tilde y_t^\top\bar\lambda_t.
    \end{align}
    Since $\tilde y_t^\top\bar\lambda_t = (\tilde y_t-\bar y_t)^\top\bar\lambda_t + \bar y_t^\top\bar\lambda_t$ and $\bar y_t^\top\bar\lambda_t = L_t(u_t,\bar\lambda_t)-L_t(u_t,\textbf 0_N)$, (\ref{eq46}) gives the result that the first term of (\ref{eq45}) satisfies
    \begin{align}\label{eq47}
    \begin{split}
        &L_t(u_t,\textbf 0_N)-L_t(u_t,\bar\lambda_t)\\
        &\leq\frac{1}{2\gamma\kappa_t}(\Vert\bar\lambda_t\Vert^2-\Vert\bar\lambda_{t+1}\Vert^2) + \frac{1}{2\gamma\kappa_t}\Vert\gamma\kappa_t\tilde y_t-\eta_t\bar\lambda_t\Vert^2\\
        &\quad+\Vert\bar\lambda_t\Vert\cdot\Vert\tilde y_t-\bar y_t\Vert.
    \end{split}
    \end{align}
    
    As the next step, recalling updating law (\ref{algorithm_u}) along the property possessed by projection mapping that $\Vert\mathcal P_{\Omega}(x)-\mathcal P_{\Omega}(y)\Vert\leq\Vert x-y\Vert$ yields
    \begin{align}\label{eq48}
    \begin{split}
        \Vert u_{t+1}-u_{t}^{\star}\Vert^2&\leq\Vert u_{t}-u_{t}^{\star}-\kappa_t s_{t}\Vert^2\\
        &\leq\Vert u_{t}-u_{t}^{\star}\Vert^2 + \Vert\kappa_t s_{t}\Vert^2\\
        &\quad -2\kappa_t s_{t}^\top(u_{t}-u_{t}^{\star}).
    \end{split}
    \end{align}
    
    By the first-order property of characterization of convex functions, we have $-2\kappa_t s_{t}^\top(u_{t}-u_{t}^{\star})\leq-2\kappa_t[f_{t}(u_{t})-f_{t}(u_{t}^{\star})+(\textbf 1_2\tilde\lambda_{t})^\top(u_{t}-u_{t}^{\star})]$. As a result of $f_{t}(u_{t})-f_{t}(u_{t}^{\star}) =L_t(u_t,\bar\lambda_t)-L_t(u_t^\star,\bar\lambda_t)$, we can further conclude that
    \begin{align}\label{eq49}
    \begin{split}
        &L_t(u_t,\bar\lambda_t)-L_t(u_t^\star,\bar\lambda_t)\\
        &\leq\frac{1}{2\kappa_t}(\Vert u_t-u_t^\star\Vert^2-\Vert u_{t+1}-u_t^\star\Vert^2)+\frac{\kappa_t}{2}\Vert s_t\Vert^2\\
        &\quad +2\Vert u_t\Vert\cdot\Vert\tilde\lambda_t-\bar\lambda_t\Vert.
    \end{split}
    \end{align}
    Substituting (\ref{eq47}) and (\ref{eq49}) into (\ref{eq45}) and rearranging the terms ends the proof.
\end{proof}
\subsection{Proof of Lemma 2}
\begin{proof}
    We regroup $S(T)$ as the summation of $S_1(T)$ and $S_2(T)$ for notational simplicity, as shown by
    \begin{align}\label{eq50}
        S(T)&=\underbrace{\sum_{t=1}^T\frac{1}{2\kappa_t}\left(\Vert u_t-u_t^\star\Vert^2-\Vert u_{t+1}-u_{t+1}^\star\Vert^2\right)}_{S_1(T)}\\
        &\quad+\underbrace{\sum_{t=1}^T\frac{1}{2\kappa_t}\left(\Vert u_{t+1}-u_{t+1}^\star\Vert^2-\Vert u_{t+1}-u_t^\star\Vert^2\right)}_{S_2(T)}\notag.
    \end{align}
    By taking the similar approach alike (\ref{eq40}), $S_1(T)$ can be rearranged as
    \begin{align}\label{eq51}
        S_1(T) &=\frac{1}{2\kappa_1}\Vert u_1-u_1^{\star}\Vert-\frac{1}{2\kappa_{T+1}}\Vert u_{T+1}-u_{T+1}^{\star}\Vert\notag\\
        &\quad+\frac{1}{2}\sum_{t=2}^{T}(\frac{1}{\kappa_{t}}-\frac{1}{\kappa_{t-1}})\Vert u_{t}-u_{t}^{\star}\Vert^2\\
        &\leq NB_u^2/\kappa_T\notag.
    \end{align}
    From $\Vert x\Vert^2-\Vert y\Vert^2\leq\Vert x+y\Vert\cdot\Vert x-y\Vert$, it can be easily seen that
    \begin{align}\label{eq52}
        S_2(T)&\leq\sum_{t=1}^T\frac{1}{2\kappa_t}(2\Vert u_{t+1}\Vert+\Vert u_{t+1}^{\star}\Vert+\Vert u_{t}^{\star}\Vert)\cdot\Vert u_{t}^{\star}- u_{t+1}^{\star}\Vert\notag\\
        &\leq NB_uV(T).
    \end{align}
	Combining the results of (\ref{eq51}) and (\ref{eq52}) completes the proof.
\end{proof}

\ifCLASSOPTIONcaptionsoff
\newpage
\fi
\bibliography{ref}{}
\bibliographystyle{ieeetr}
\IEEEpubidadjcol
\newpage

\begin{IEEEbiographynophoto}{Yiqiao Xu} (Student Member, IEEE) holds a B.Eng. degree in detection, guidance, and control technology and an M.Sc. degree in advanced control and systems engineering, which were respectively received from Harbin Engineering University, Harbin, China and The University of Manchester, Manchester, U.K. in 2017 and 2018. He is currently pursuing the Ph.D. degree in Electrical and Electronic Engineering with The University of Manchester, Manchester, U.K., also working as a Research Associate there since late 2022. His research interests involve optimization and learning-based control, with a specific focus on smart grids and multi-energy systems. 
\end{IEEEbiographynophoto}

\begin{IEEEbiographynophoto}{Alessandra Parisio} (Senior Member, IEEE) is currently a Senior Lecturer with the Department of Electrical and Electronic Engineering, The University of
Manchester, Manchester, U.K., where she is the Principal or Co-investigator of research projects supported by EPSRC, Innovate U.K., EC H2020 totalling about £4 million as The University of Manchester share. Her main research interests include energy management systems under uncertainty, model predictive control, stochastic constrained control, and distributed optimisation for power systems. She is the Vice-Chair for Education of the IFAC Technical Committee 9.3. Control for Smart Cities, and the Editor of the Elsevier journal Sustainable Energy, Grids and Networks, Results in Control and Optimisation and the IEEE transactions on Automation and Science Engineering. She was the recipient of the IEEE PES Outstanding Engineer Award in January 2021 and the Energy and Buildings Best Paper Award for (for 2008-2017) in January 2019.
\end{IEEEbiographynophoto}

\begin{IEEEbiographynophoto}{Zhongguo Li} (Member, IEEE) received the B.Eng. and Ph.D. degrees in electrical and electronic engineering from The University of Manchester, Manchester, U.K., in 2017 and 2021, respectively. He was a Research Associate with the Department of Aeronautical and Automotive Engineering, Loughborough University, Loughborough, U.K., from 2020 to 2022. He is currently a Lecturer in Robotics and AI with Department of Computer Science, University College London, London, U.K. His research interests include optimization and decision-making for advanced control, game theory and learning in multi-agent systems, and their applications in autonomous vehicles and robotics.
\end{IEEEbiographynophoto}

\begin{IEEEbiographynophoto}{Zhen Dong} (Member, IEEE) received
the B.Eng. and M.Eng. degrees in electrical engineering from the Harbin Institute of Technology (HIT), Harbin, China, in 2016 and 2018, and the Ph.D. degree in electrical and electronic engineering from The University of Manchester, Manchester, U.K, in 2022. He is currently a Research Fellow with The University of Warwick, Coventry, U.K. His research interests include optimization and control of multiple-constraint electrical systems with applications to grid-connected wind farm management and high-performance motor control.
\end{IEEEbiographynophoto}

\begin{IEEEbiographynophoto}{Zhengtao Ding} (Senior Member, IEEE) received the B.Eng. degree from Tsinghua University, Beijing, China, in 1984, and the M.Sc.degree in systems and control, and the Ph.D. degree in control systems from The University of Manchester Institute of Science and Technology, Manchester, U.K. in 1986 and 1989, respectively. After working as a Lecturer with NgeeAnn Polytechnic, Singapore, for ten years, he joined The University of Manchester in 2003, where he is currently Professor of Control Systems with the Dept. of Electrical and Electronic Engineering. He is the author of the book: Nonlinear and Adaptive Control Systems (IET, 2013) and has published over 300 research articles. His research interests include nonlinear and adaptive control theory and their applications, more recently network based control, distributed optimization and distributed machine learning, with applications to power systems and robotics. Prof.Ding has served as an Associate Editor for the IEEE Transactions on Automatic Control, IEEE Control Systems Letters, and several other journals. He is a member of IEEE Technical Committee on Nonlinear Systems and Control, IEEE Technical Committee on Intelligent Control, and IFAC Technical Committee on Adaptive and Learning Systems.
\end{IEEEbiographynophoto}

\end{document}